\newtheorem{theorem}{Theorem}
\newtheorem{proposition}{Proposition}
\newtheorem{definition}{Definition}
\definecolor{Gray}{gray}{0.9}
\DeclarePairedDelimiter\floor{\lfloor}{\rfloor}
\newcounter{bean}
\title[Learning Nesting Structure]{Learning Structure in Nested Logit Models}
\author[Aboutaleb et al.]{Youssef M. Aboutaleb$^{\dagger}$, Moshe E. Ben-Akiva $^{\dagger}$, Patrick Jaillet $^{\dagger}$}
\address{$^{\dagger}$ Massachusetts Institute of Technology,
            77 Massachusetts Avenue, Cambridge, MA 02139, USA.}
\email{ymedhat@mit.edu, mba@mit.edu, jaillet@mit.edu}
\def\AmSTeX{$\cal A$\kern-.1667em\lower.5ex\hbox{$\cal M$}\kern-.125em
    $\cal S$-\TeX}
\def\BibTeX{{\rm B\kern-.05em{\sc i\kern-.025em b}\kern-.08em
    T\kern-.1667em\lower.7ex\hbox{E}\kern-.125emX}}
\begin{document}
  %\doublespacing
  \begin{abstract}
This paper introduces a new data-driven methodology for nested logit structure discovery. Nested logit models allow the modeling of positive correlations between the error terms of the utility specifications of the different alternatives in a discrete choice scenario through the specification of a nesting structure. Current nested logit model estimation practices require an \textit{a priori} specification of a nesting structure by the modeler. In this we work we optimize over \textit{all} possible specifications of the nested logit model that are consistent with rational utility maximization. We formulate the problem of learning an optimal nesting structure from the data as a mixed integer nonlinear programming (MINLP) optimization problem and solve it using a variant of the linear outer approximation algorithm.  We exploit the tree structure of the problem and utilize the latest advances in integer optimization to bring practical tractability to the optimization problem we introduce. We demonstrate the ability of our algorithm to correctly recover the true nesting structure from synthetic data in a Monte Carlo experiment. In an empirical illustration using a stated preference survey on modes of transportation in the U.S. state of Massachusetts, we use our algorithm to obtain an optimal nesting tree representing the correlations between the unobserved effects of the different travel mode choices. We provide our implementation as a customizable and open-source code base written in the Julia programming language. 
% keywords can be removed
\keywords{Nested Logit, Discrete Choice, Algorithmic Model Selection, Machine Learning}

    \end{abstract}

    % main body

 \section{Introduction}
 \setcounter{equation}{0}

    % references
  
 First introduced by \cite{benakiva-1972} in the context of travel demand modeling and extended by \cite{mcfadden-1978} in the context of resdiential location choice, the nested logit model is widely used in the marketing and transportation science literatures among others for modeling the choice a rational decision-making agent makes from a set of mutually exclusive alternatives (\citealp{benakiva-1985}). That choice might be the travel mode of daily commute, a flight to book, or an economics textbook to buy. The nested logit model belongs to a wider class of behavioral choice models known as random utility models (RUMs).
 
RUMs turn on the assumption that the decision maker ranks the available choices (or alternatives) in order of preference as represented by a latent (i.e., unobserved from the modeler's perspective) utility function. The utility that a decision-making agent associates with a particular alternative is assumed to be comprised of a systematic component and an error term. The systematic component is specified as a function (typically linear-in-parameters) of the attributes of the alternative and characteristics of the decision maker. The error term is modeled as a random variable. There is a utility equation for each alternative. Logit models are a family of RUMs that assume a joint extreme-value distribution of the error terms across the alternatives. Further assumptions on the the error terms lead to different types of logit models and give rise to different mathematical forms for the choice probabilities.
 
 The multinomial (flat) logit model assumes independence and homoscedasticty of the error terms across the different alternatives. The independence assumption implies that the odds ratios between any two alternatives are independent of the attributes or availability of the other alternatives. This property, called the independence from irrelevant alternatives (IIA), is convenient from an estimation stand point but is an unrealistic restriction on agent behavior in some applications. One implication of the IIA assumption is ``proportional substitution" between the alternatives. The red bus/blue bus paradox where the introduction of a blue bus travel mode, with the same attributes as an existing red bus travel mode, draws equally from the market shares of \textit{all }of the available alternatives (instead of splitting the red bus market share), is a classic example of a choice setting where the IIA property is inappropriate. Demand predictions can be seriously compromised by incorrectly assuming the IIA property (\citealp{mcfadden1977application}).

 The nested logit model is invoked when it is thought that there are common unobserved attributes of the choice alternatives. The independence assumption is relaxed by partitioning the alternatives into mutually exclusive and collectively exhaustive subsets called “nests”. The error terms in the utility expressions are decomposed into alternative specific and nest specific error terms, the latter of which introduces correlations between the collective error terms of alternatives sharing a common nest. This breaks the independence assumption of the multinomial logit and allows the modeling of more general substitution patterns. The homoscedasticty assumption remains, the total variance of the error terms is assumed to be the same for all the alternatives. Each nest is associated with a scale parameter which quantifies the variance of the nest specific error terms and consequently the covariance between the collective error terms of any two alternatives sharing the nest. 

There are references in the literature to two ``different" nested logit models: a non-normalized nested logit (NNNL) and a utility maximization nested logit (UMNL) model (\citealp{koppelman1998alternative}; \citealp{koppelman1998nested} ). These two models where shown to be equivalent-- the difference being in the normalization of the scale parameters; see \cite{hensher2002specification}.

To specify the nested logit model, a nesting partition of the alternatives is necessary. In some applications, the appropriate partition to use is immediately clear as in \cite{mcfadden-1978} where residential location choice is made first by community and then by dwelling type. In many other applications, however, the partition choice is \textit{ad hoc}. That the estimation results (including the policy parameters in the systematic specification) are affected by the partition choice is viewed as a problematic aspect of the nested logit model; see \cite{greene2003econometric}.

The large set of possible partitions precludes an exhaustive search and presents a significant modeling challenge in deciding which partitioning of the alternatives best reflects the underlying choice behaviour of the population. Typically, shortcuts are made on the basis of \textit{a priori} reasoning in determining a partition. The current \textit{modus operandi} is to rely on domain knowledge to substantially reduce the feasible set to a small set of candidate partitions and employ statistical techniques to determine goodness of fit. Many studies in literature present several sets of results based on different partitions. This is done at the risk of potentially excluding some ostensibly non-intuitive structures which might actually provide a better description of the choice behaviour of the population under study; see \cite{koppelman-2006} and \cite{andrew-tree}. This motivates the approach we take in this paper for a more \textit{holistic} view of nested logit model estimation, i.e., one that optimizes over the nesting partition as well as model parameters.
 
The desire for finding an optimal nesting tree (or partition) was first articulated in the literature by \cite{andrew-tree} who spoke of an ``ultimate need" of a method to empirically identify an optimal structure. \cite{andrew-tree} formulated the likelihood as a function of the nesting tree and the model parameters but remarked that since the function is discrete the only reliable method (at the time) was to evaluate the maximum of the likelihood at different candidate tree specifications and choose the specification that provides the best fit. A key intuition from \cite{andrew-tree}'s formulation is that the optimization has to be made \textit{jointly} with nesting structure, scale parameters, and model parameters in the systematic specification. Indeed the optimal structure is \textit{not} independent of the systematic specification. We indulge on this point later in this section. The desire for a systematic method for learning nested logit structure has been reiterated in many works including \cite{bierlaire-book} and \cite{greene2003econometric}.

Testing departures from the IIA property is sometimes used to guide the specification of the nesting partition. Hausman’s specification test (\citealp{hausman1984specification}) turn on the observation that if a subset of the alternatives is truly irrelevant then its omission, although might lead to inefficiencies, will not systematically affect the estimated parameters. The test compares the estimated parameters with and without the subset of alternatives to determine is the IIA is violated. Hausman’s test of course requires the researcher to specify which subsets to omit and therefore can not be the basis of a practical nested structure learning algorithm since there is exponentially many such subsets of alternatives to test. 

Perhaps the first attempt at automating the nested logit structure specification appears in \cite{benson2016relevance}. The authors notice that an alternative $k$ is an irrelevant alternative of alternatives $i$ and $j$ if and only if including it in the choice set does not significantly affect the pairwise preference of item $i$ over item $j$. They describe a battery of statistical tests to compare the probability of choosing $i$ over $j$ and formulate an algorithm to systematically test the IIA assumption and build a corresponding nesting tree. Once the structure is discovered in this way, the authors then specify a systematic component for the choice model and estimate its the parameters. 

The method of \cite{benson2016relevance} fails to account for the interdependence between the optimal nesting structure and the systematic specification of the utility equations. Recall that the error terms are defined simply as the difference between the true latent utility and the systematic utility specified by the modeler, therefore the independence (or lack thereof) of the error terms depends on the specification of the systematic utility. For a given choice situation, two different specifications of systematic utility will result in two different sets of error terms. One set might be independent while the other might not. Therefore, the IIA assumption might hold for one specification of systematic utility but not for another, even though both specifications relate to the same choice situation. This means that, strictly speaking, the IIA property is or is not valid for a particular specification of systematic utility in a logit model of a particular choice situation, not for the choice situation itself; see \cite{mcfadden1977application} for a more complete discussion.

In this work, we take the full likelihood approach first suggested by \cite{andrew-tree}. We introduce and formulate the nested logit structure learning problem (NLSLP) as a linearly constrained mixed integer nonlinear programming (MINLP) problem. We introduce a solution algorithm based on the linear outer approximation algorithm (\citealp{fletcher-1994}).  Over the past 25 years, algorithmic advances in integer optimization coupled with hardware improvements have resulted in an 800 billion factor speedup in mixed-integer optimization (\citealp{bertsimas2017optimal}). Furthermore, besides the massive speed increases, the introduction of ``lazy constraints" implementation in modern-day solvers (which provide for an efficient way of dealing with constraints that grow exponentially in number with the size of the problem) has brought tractability to many problems of practical interest (\citealp{pearce2019towards}). This has enabled recent successes when applying modern mixed integer optimization methods to a selection of statistical problems (\cite{bertsimas2016best}; \cite{bertsimas2014least}; \cite{bertsimas2017logistic}; \cite{bertsimas2016or}; \cite{aboutaleb2020sparse}). We make use of the state-of-the art in algorithmic development in solving mixed integer optimization problems and lazy constraints to bring practical tractability to the NLSLP.

 The remainder of this paper is organized as follows. Section 2 introduces the notations we use throughout the paper and reviews the relationship between nesting trees and the variance-covariance matrix of the error terms in a nested logit model. We also introduce an important result that brings needed tractability to the NLSLP. In Section 3, we formally introduce the NLSLP and describe in detail its formulation. In Section 4, we present a complete solution algorithm for the NLSLP. Section 5 presents Monte Carlo simulations to validate our proposed methodology, in addition to an empirical application. Section 6 concludes the paper. 
  \section{Technical Setup and Foundational Results}
 \setcounter{equation}{0}
 In this section we introduce the notation we use throughout the paper and present a result that brings tractability to the nested logit structure learning problem that we use in our formulation Section 3. 
 In accordance with the notation in \cite{benakiva-1985}, let $U_{in}=V_{in}+\epsilon_{in}$ be the utility agent $n\in \mathcal{I}$ associates with alternative $i\in \mathcal{C}_n \subseteq \mathcal{C}$. $V_{in}$ is the systematic utility component (a deterministic, typically linear, function of model ``taste" parameters, attributes of the alternatives, and characteristics of the agent) and $\epsilon_{in}$ are extreme-value distributed errors. The error terms represent the effects of omitted taste variations, choice attributes and socioeconomic variables. 
 Under the nested logit framework, the universal choice set $\mathcal{C}$ is partitioned into mutually exclusive and collectively exhaustive ``nests" formally defined as follows:

\begin{definition}
A \textit{\textbf{nested partition}}, $\mathcal{B}$, of a set $\mathcal{C}$ is a set of nonempty subsets $B_m \subseteq \mathcal{C}$ such that (1) One of the subsets $B_m$ is the universal set $\mathcal{C}$. (2) The subsets are nested, i.e., whenever two distinct subsets are overlapping ($B_m \cap B_m' \neq \emptyset$), one subset contains the other (either $B_m \subseteq B_m'$ or $B_m' \subseteq B_m$).
The subsets $B_m$ constituting a nesting partition $\mathcal{B}$ as referred to as ``nests".
\end{definition}

A nest $B_m \in \mathcal{B}$ is called \textit{degenerate} if $|B_m|=1$. A nested partition $\mathcal{B}$ is called degenerate if it contains one or more degenerate nests.
 
\begin{definition}
Given a nested partition $\mathcal{B}$, let $B(j)$ denote the smallest nest $B_m \in \mathcal{B}$ containing the elemental alternative $j \in \mathcal{C}$. 
Similarly, $B(\mathcal{C})$ denotes the smallest subset $B_m \in \mathcal{B}$ such that $\mathcal{C} \subseteq B_m$.
\end{definition}

Having introduced nested partitions in terms of subsets, we now introduce the more familiar graph based representation of a nested partition i.e., nesting trees. The machinery of graph theory will be used in the next section to enforce the desired \textit{arborescence} (tree) properties in the optimization formulation. The subset-based representation is convenient from a notation standpoint, and is useful when proving certain properties such as counting the total number of possible partitions. 

The following theorem states that if we don't allow degenerate nests, the maximum number of possible nests in a nesting partition is two less than the number of elemental alternatives. The proof of the theorem also shows the process of building a tree representation of a nested partition. 

\begin{theorem}
Let $\mathcal{C}$ be a finite set with $n$ elements, where $n\geq 2$. Any non-degenerate nested partition of $\mathcal{C}$ can be represented by a unique tree with $n$ leaves and at most $n-2$ internal nodes.
\end{theorem}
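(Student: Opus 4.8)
The plan is to make the passage between the subset (laminar) description of Definition~2.1 and a rooted tree completely explicit, to check that the tree is forced by $\mathcal{B}$, and then to read off the bound on internal nodes from a short edge count.

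\emph{Building the tree.} From a non-degenerate nested partition $\mathcal{B}$ I would form a rooted tree $T$ on the vertex set $\mathcal{C}\cup\mathcal{B}$, rooted at $\mathcal{C}$: declare each $j\in\mathcal{C}$ to be a child of $B(j)$, the smallest nest containing $j$, and each nest $B_m\in\mathcal{B}\setminus\{\mathcal{C}\}$ to be a child of the smallest nest strictly containing it. These parents are well defined precisely because of the nesting property: the nests containing a fixed $j$ (respectively strictly containing a fixed $B_m$) pairwise overlap, hence form a chain under inclusion, and a finite nonempty chain has a unique minimum. Passing to a parent strictly enlarges the associated subset, so there are no directed cycles and every vertex has $\mathcal{C}$ as an ancestor; thus $T$ is a tree rooted at $\mathcal{C}$ whose leaves are exactly the $n$ elements of $\mathcal{C}$ and whose internal vertices are exactly the nests of $\mathcal{B}$.

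\emph{Each internal vertex has at least two children.} This is the one place where non-degeneracy is essential, and I expect it to be the crux. The children of a nest $B_m$ are the elements $j$ with $B(j)=B_m$ together with the nests maximal among nests properly contained in $B_m$; using the chain property one checks that if $B_m$ properly contains some nest then it has a child that is a nest, and if $B_m$ contains an element lying in no smaller nest then it has a child that is a leaf. Assuming $B_m$ has at most one child and running through the possibilities then forces either $B_m=\{j\}$ for a single $j$ (impossible since $\mathcal{B}$ is non-degenerate) or an impossible chain $B_m\subseteq B'\subsetneq B_m$ with $B'$ one of its own sub-nests; the same argument, now invoking $n\ge 2$, shows the root $\mathcal{C}$ also has at least two children.

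\emph{Counting and uniqueness.} Let $I$ be the number of internal vertices; then $T$ has $I+n$ vertices and $I+n-1$ edges, and counting each edge at its lower endpoint gives $I+n-1\ge 2I$, i.e. $I\le n-1$. Since $\mathcal{C}\in\mathcal{B}$ is always present as the root, the number of non-root internal vertices is at most $n-2$, the stated bound; an induction on $n$ that cuts $\mathcal{C}$ along the children of the root gives the same count, but the edge count is shorter. For uniqueness, any tree representing $\mathcal{B}$ in the natural sense---each internal vertex standing for the set of leaves beneath it---must realize ``ancestor'' as ``$\supseteq$'' on $\mathcal{B}$, hence must be the Hasse diagram of the tree-order $(\mathcal{B},\subseteq)$ with each leaf $j$ hung below $B(j)$; a poset determines its Hasse diagram uniquely, so the tree representing $\mathcal{B}$, with leaves labelled by $\mathcal{C}$, is unique, and conversely reading off the leaf-sets of the internal vertices of any such tree recovers a non-degenerate nested partition, making the correspondence a bijection.
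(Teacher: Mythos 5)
Your proposal is correct and follows essentially the same route as the paper: the same explicit construction of the tree from the nested partition via the ``smallest containing nest'' parent map, followed by an edge count that exploits the fact that non-degeneracy forces every internal vertex to have at least two children (your direct child-count and the paper's handshaking-lemma degree count are the same argument). You supply somewhat more detail on the two-children claim and on uniqueness than the paper does, but the underlying idea is identical.
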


\begin{proof}
Suppose $\mathcal{B}$ is a non-degenerate nested partition of $\mathcal{C}$. To construct a tree representation, $\mathbb{T}$, of $\mathcal{B}$: 
\begin{list}
{\textsc{Step} \arabic{bean}.}{\usecounter{bean}}   
\item Create a \textit{root} node ($\circ$) and let it represent the universal choice set $\mathcal{C}$.
    \item Represent each nest $B_m \in \mathcal{B}$ containing a strict subset of the alternatives in $\mathcal{B}$ by an \textit{internal} node ($\bigtriangleup$).
        \item Represent each elemental alternative $j \in \mathcal{C}$ by a \textit{leaf} node ($\bullet$).
    \item Connect each leaf node to the internal node corresponding to the smallest nest containing it. Formally for each $j \in \mathcal{C}$ find $B(j)$ and draw a directed edge from the node representing $B(j)$ to the leaf node representing $j$.
    \item Similarly, connect each nest to the smallest set containing it. For each $B_m \in \mathcal{B}$ find $B(B_m)$ and connect the node representing $B(B_m)$ to that representing $B_m$.

\end{list}
Property 1 of Definition 2.1 guarantees that each $B(.)$ is defined, and property 2 guarantees that each $B(.)$ is unique (since degeneracy is not allowed).

We now show that the number of nest nodes is at most $n-2$. Let $b$ denote the number of nest nodes. The $n$ leaf nodes are terminal nodes and must have degree one. Since the nesting partition is non-degenerate by assumption, all internal nodes must have at least two children. This implies that (i) the root has at least degree two, and (ii) the nest nodes have at least degree three (one parent and at least two children). Therefore $deg(\mathbb{T}) \geq n +3b+2$. Now, by the \textit{Handshaking lemma} (\citealp{vasudev2006graph}) we have that the degree of the graph is twice the number of edges: $deg(\mathbb{T}) = 2|\mathcal{E}|$. Furthermore, since $\mathbb{T}$ is a tree, the number of edges is one less than the number of nodes so that $|\mathcal{E}|=(n+b+1)-1$, therefore, $2(n+b) \geq n + 3b+2$ and hence $b \leq n-2$.
\end{proof}\hfill $\square$\\
As an illustration, Figure 1 shows the four possible nesting partitions and their corresponding tree representations for the universal choice set $\mathcal{C}=\{1,2,3\}$.

Theorem 2.1 brings tractability to the problem of learning nested logit structures. Namely, in an optimization framework, we can simply include the maximum number of nests allowed and let an optimization procedure guide the inclusion or exclusion of these nests. Degenerate nests do not affect the likelihood and can be omitted without loss of generality.

The nesting structure determines \textit{how} the alternatives are correlated, and the associated scale parameters determine by \textit{what amount} they are correlated. The exact relation between the nesting structure and correlation of the error terms is laid out in the following proposition; see \cite{daganzo1993two} and \cite{galichon2019representation} for a theoretical derivation.
\begin{proposition}
Let $\mathcal{B}$ be a nested partition of the choice set $\mathcal{C}$. If the nested logit assumptions are satisfied, then the covariance $cov(\epsilon_i,\epsilon_j)$ between the error terms in the utility expression of any two distinct alternatives $i,j \in \mathcal{C}$ is given by:
\begin{equation*}
    \frac{\pi^2}{6}\big(\frac{1}{\mu_{r}^2}-\frac{1}{\mu_{B(\{i,j\})}^2}\big)
\end{equation*}
where $\mu_{r}$ is the scale of the root node and $\mu_{B(\{i,j\})}$ is the scale of the nest node corresponding to the smallest partition containing both $i$ and $j$.
\end{proposition}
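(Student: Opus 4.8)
The plan is to exploit the hierarchical (variance-components) representation of the nested-logit error vector, reduce the covariance of $\epsilon_i$ and $\epsilon_j$ to a sum over the part of the tree that $i$ and $j$ share, and finish with a one-line telescoping computation.

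\emph{Step 1 (the decomposition).} I would first invoke the classical variance-components representation of the generalized extreme-value / nested-logit error — the Cardell-type decomposition underlying the derivations of \cite{daganzo1993two} and \cite{galichon2019representation}. Let $\mathbb{T}$ be the tree associated with $\mathcal{B}$ as constructed in the proof of Theorem 2.1, let $\mathcal{P}(i)$ be the ordered list of nodes $r = n_0, n_1,\dots,n_K = B(i)$ on the root-to-leaf path to $i$, and let $p(m)$ denote the parent of node $m$. Under the nested logit assumptions one has $\epsilon_i = \omega_r + \sum_{m\in\mathcal{P}(i)\setminus\{r\}}\omega_m + \omega_i$, where the $\omega$'s are mutually independent, $\omega_r$ is degenerate with zero variance (there is no shock common to all alternatives), $\omega_i$ is the leaf-specific idiosyncratic component, and each nest node $m$ carries $\mathrm{Var}(\omega_m) = \tfrac{\pi^2}{6}\big(\tfrac{1}{\mu_{p(m)}^2}-\tfrac{1}{\mu_m^2}\big)$, a nonnegative quantity because RUM-consistency forces the scales to be nondecreasing from the root down toward the leaves. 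The feature I will use repeatedly is that $\omega$-components attached to disjoint subtrees of $\mathbb{T}$ are independent.

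\emph{Step 2 (reduction via tree combinatorics).} Then, for distinct $i,j$, bilinearity of covariance together with independence gives that the only surviving terms are those indexed by nodes common to both decompositions, namely the internal nodes that are ancestors of both leaves; all remaining pairs of components lie in disjoint subtrees and contribute nothing. Hence $\mathrm{cov}(\epsilon_i,\epsilon_j)=\sum_{m\in\mathcal{P}(i)\cap\mathcal{P}(j)}\mathrm{Var}(\omega_m)$. I would then use the standard rooted-tree identity that $\mathcal{P}(i)\cap\mathcal{P}(j)$ is exactly the path from $r$ to the least common ancestor of $i$ and $j$, and observe that by Definitions 2.1--2.2 this least common ancestor is precisely the node representing $B(\{i,j\})$ — the set of nests containing $\{i,j\}$ is nonempty (it contains $\mathcal{C}$) and totally ordered by inclusion, so it has a smallest element, and that element is the deepest common ancestor in $\mathbb{T}$.

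\emph{Step 3 (telescoping) and the obstacle.} Writing the shared path as $r = n_0 \to n_1 \to \cdots \to n_L = B(\{i,j\})$ and dropping the zero-variance root term, I get
\[
\mathrm{cov}(\epsilon_i,\epsilon_j)=\sum_{k=1}^{L}\frac{\pi^2}{6}\Big(\frac{1}{\mu_{n_{k-1}}^2}-\frac{1}{\mu_{n_k}^2}\Big)=\frac{\pi^2}{6}\Big(\frac{1}{\mu_{n_0}^2}-\frac{1}{\mu_{n_L}^2}\Big)=\frac{\pi^2}{6}\Big(\frac{1}{\mu_r^2}-\frac{1}{\mu_{B(\{i,j\})}^2}\Big),
\]
which is the claimed formula; note it correctly degenerates to $0$ exactly when $i$ and $j$ share no nest other than $\mathcal{C}$. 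The main obstacle is Step 1: the additive variance-components representation of the GEV error vector, and in particular the value $\tfrac{\pi^2}{6}(\mu_{p(m)}^{-2}-\mu_m^{-2})$ of each nest-level shock's variance, is not elementary — it rests on properties of the Cardell distribution and on the recursive structure of the nested-logit generating function — so I would state it as a lemma attributed to \cite{daganzo1993two} and \cite{galichon2019representation} rather than re-derive it, after which Steps 2--3 are routine. A fully self-contained alternative would extract the bivariate marginal law of $(\epsilon_i,\epsilon_j)$ directly from the nested GEV generating function and integrate, but that is considerably more laborious and ultimately exhibits the same telescoping cancellation across the levels between the root and $B(\{i,j\})$.
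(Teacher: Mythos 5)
Your argument is correct and is essentially the route the paper itself takes: the paper states Proposition 2.1 without proof, deferring to \cite{daganzo1993two} and \cite{galichon2019representation} and remarking afterwards that the covariance is ``the variance of the nest specific error term of the shared nest'' --- which is exactly the variance-components decomposition you formalize and then telescope. Your Step 1, which you rightly flag as the only non-elementary input and attribute to the same sources the paper cites, carries all the content; Steps 2 and 3 (identifying the shared components with the root-to-LCA path and telescoping) are the routine bookkeeping, correctly executed.
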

Since the overall scale of the utility equations is not defined, only $|\mathcal{C}|-1$ scale parameters may be identified. We follow the convention of normalizing the scale of the root nest, $\mu_r$, to 1. The nested logit model is consistent with rational utility maximization only if the scale parameters increase with increasing nesting levels; see \cite{borsch1990compatibility}. Formally the scale parameters must satisfy:
\begin{equation}
    B_m \subseteq B_{m'} \implies \mu_{m'} \leq \mu_m
\end{equation}
Note that this constraint on the scale parameters restricts that covariances to be  non-negative (this should be clear from the expression in Proposition 2.1). Intuitively, the covariance between two error terms in a nested logit model is the variance of the nest specific error term of the shared nest which can not be negative; see \cite{williams1982behavioural} for details and \cite{dong2017negative} for modeling implications.
Furthermore by the same proposition, two alternatives are uncorrelated if the smallest partition containing both is the universal set $\mathcal{C}$ (or equivalently their youngest common ancestor is the root node). If a nested partition consists only of the universal set $\mathcal{C}$, the nested logit model reduces to the multinomial logit model. Figure 2 illustrates such a partition, and the corresponding tree structure (c.f. Theorem 2.1) and covariance matrix (c.f. Proposition 2.1). Figure 3 shows another example for the same choice set, but where the nested partition is given by $\{\{a_1,a_2,a_3,a_4\},\{a_2,a_3,a_4\},\{a_3,a_4\}\}$.

Given a nested partition $\mathcal{B}$ over $\mathcal{C}$, consider the probability $P_n\{j\}$ of agent $n\in \mathcal{I}$ choosing alternative $j \in \mathcal{C}$. Using the product rule and conditioning over all the nests in $\mathcal{B}$ containing $j$ in ascending order of cardinality, we have:
\begin{align}
    P_n\{j\}=P_n\{j|B(j)\}P_n\{B(j)|B^2(j)\} \ldots P_n\{B^{k-1}(j)|B^k(j)\}P_n\{B^k(j)\}
\end{align}
Where $B^s(.)$ is the s-fold application of the function $B(.)$, and $k$ is the number of partitions in $\mathcal{B}$ that contain the alternative $j$. Note that $B^k(j)=\mathcal{C}$, i.e., the largest set containing the alternative $j$ is the set $\mathcal{C}$. $P_n\{j|B(j)\}$ is the conditional probability of choosing $j$ given that some alternative in $B(j)$ is chosen. This quantity is given by the relative attractiveness of alternative $j$ compared to the maximum utility obtainable by choosing some alternative in $B(j)$, which we denote by $\Gamma_n\{B(j)\}$.
The Inclusive value, $\Gamma_n\{B_m\}$, of a nest $B_m$ is its expected maximum utility to agent $n$ (\citealp{benakiva-1985}), defined as follows:
\begin{equation}
    \Gamma_n\{B_m\} =\frac{1}{\mu_{B_m}} \ln{ \big(\sum_{j \in \mathcal{C}| B(j)=B_m} e^{\mu_{B_m}V_{jn}}} + \sum_{B_{m'} \in \mathcal{B}|B_{m'} \subseteq B_m} e^{\mu_{B_m}\Gamma_n\{B_{m'}\}}  \big)
\end{equation}
So that
\begin{align}
    P_n\{j|B(j)\}=\exp(\mu_{B(j)}(V_j-\Gamma_n\{B(j)\}))
\end{align}
Similarly, the probability of choosing nest $B^s(j)$ conditional on choosing $B^{s+1}(j)$, the smallest nest containing it, is given analogously to (2.3):
\begin{align}
    P_n\{B^s(j)|B^{s+1}(j)\}=\exp(\mu_{B^{s+1}(j)}(\Gamma_n\{B^{s}(j)\}-\Gamma_n\{B^{s+1}(j)\}))
\end{align}
Finally, since \textit{some }alternative in $\mathcal{C}$ has to be chosen, we have $P_n\{B^k(j)\}=1$.
Putting it all together, we have
\begin{align}
    P_n\{j\}=\exp\Big({\mu_{B(j)}V_{jn} +\sum_{i=1}^{k-1}(\mu_{B^{i+1}(j)}-\mu_{B^i(j)})\Gamma_n\{B^i(j)\}-\mu_{B^{k}(j)}\Gamma_n\{B^{k}(j)\}}\Big)
\end{align}
The closed-form (2.6), while mathematically tractable, is a rather complicated function involving nested logs of sums of exponential terms. We discuss the implications of this on the solution algorithm in Section 4.

In this section we reviewed the modeling flexibility the nested logit model brings in the specification of the variance-covariance matrix of the error terms while maintaining closed-form expressions for the choice probabilities. 
Our motivation is to algorithmically and systematically determine which of the allowable variance-covariance matrix
structures (as determined by the nesting trees) best fits the data.
 Working with the variance-covariance matrix directly however is not very attractive in the nested logit framework because of the many constraints on which matrix structures are permissible in that framework. We introduced notations and equivalences between three different representations of nested logit models namely 
 (i) a nested partition based representation that was used to formulate the choice probabilities, (ii) a tree based representation that was used to prove Theorem 2.1, and will be useful in the optimization framework we introduce in the next section, and (iii) a variance covariance based representation which helped us understand the flexibility and limitations of the nested logit model.
      \section{The Nested Logit Structure Learning Problem}
 \setcounter{equation}{0}
The general framework we take for finding an optimal nesting tree $\mathbb{T}$ (representing a nesting partition) and its parameters $\theta_{\mathbb{T}}$ (the scale and taste parameters) is a mixed-integer non-linear program. At a high level, the problem can be written as follows:
\begin{align}
    \mathbf{max_{\mathbb{T,\theta_T}}} & \textnormal{ Model Likelihood}\\
    \mathbf{subject \; to \;} & \textnormal{}\mathbb{T} \textnormal{ is a valid nesting tree}\\
    &\textnormal{Scale parameters are non-decreasing with nesting level }\\
    & \textnormal{Number of nests} = M \\
    & \textnormal{Nesting levels} = L 
\end{align}
The objective is to optimize over all possible nesting tree specifications. This is achieved by maximizing the model likelihood while penalizing model complexity to avoid over-fitting. The constraints restrict the search to valid nesting trees that satisfy the utility maximization constraints given by equation (2.1).

Constraint (3.2) guarantees that $\mathbb{T}$ is a valid nesting tree which requires the prohibition of cycles and enforcement of graph connectivity. Constraint (3.3) places restrictions on the scale parameters of the tree. In order for the estimated parameters to be consistent with utility maximization the scale parameters can not decrease with nesting level. The last two constraints, (3.4) and (3.5), limit the complexity of the model. The optimal parameters $M$ and $L$ are determined through cross-validation (by evaluating the likelihood on a hold out ``validation" data-set).

In this section, we obtain a closed form expression for the likelihood (Section 3.1) as a function of nesting structure and model parameters and mathematically formulate (3.2)-(3.5) as linear constraints (Section 3.2). We end with a discussion on the regularization framework we adopt in Section 3.3.

\subsection{A Mixed-Integer Nonlinear Program Formulation}

Let $\mathcal{B}$ be a non-degenerate nested partition of the set of alternatives $\mathcal{C}$ and consider the graph representation $\mathbb{T}$ of $\mathcal{B}$. $\mathbb{T}$ is a tree (cf. Theorem 2.1) and in building $\mathbb{T}$ from $\mathcal{B}$ we have: 
\begin{enumerate}
\item A root node $r$ representing the choice set $\mathcal{C}$.
\item Internal nest nodes $\mathcal{N}$ each representing a nest in $\mathcal{B} \setminus \mathcal{C}$.
\item Leaf nodes representing each alternative in $\mathcal{C}$.
\end{enumerate}

Formally, $\mathbb{T}=(\mathcal{V},\mathcal{E})$ is a directed graph where $\mathcal{V}=\{r\}\cup \mathcal{N} \cup \mathcal{C}$ is the set of vertices and $\mathcal{E}$ the set of edges. Let $x_{u,v}\in \{0,1\}$ equal one if there is a directed edge between nodes $u$ and $v$. There is a directed edge between two nodes $u$ and $v$ in $\mathbb{T}$ only if node $u$ is the nest or root node representing the smallest nesting partition containing the nest or alternative represented by $v$. Formally:
\begin{equation}
    x_{u,v}=1 \iff B(u)=v
\end{equation}

 In learning $\mathbb{T}$ from the data, we do not know \textit{a priori} the structure of the tree, or if nesting is present. The goal is to use optimization to reveal this structure. 
Theorem 2.1, provides the following guarantee: 
\begin{align}
    |\mathcal{N}|\leq |\mathcal{C}|-2
\end{align}
For convenience let $p=|\mathcal{C}|-2$. Now, since any nested partition can be represented by at most $p$ nest nodes, we start with the nest node set $\mathcal{N}$ containing that maximum number of nest nodes (i.e., $p$), and let an optimization procedure guide the inclusion or exclusion of these nests. To this end, we define for every nest node $v \in \mathcal{N}$, a binary decision variable $y_v \in \{0,1\}$ equal to one if nest $v$ is included in nesting tree $\mathbb{T}$. We shall collectively denote $\{x_e, \; e\in \mathcal{E}\}$ by $\mathbf{x}$. Similarly, we denote $\{y_v, \; v\in \mathcal{N}\}$ by $\mathbf{y}$.

\subsubsection{Objective Function}
In this section we obtain a closed form expression of the log-likelihood function. Let $c_{na}$ be a binary variable indicating if agent $n\in\mathcal{I}$ chose alternative $a\in\mathcal{C}$. Recall that $V_{an}$ is the systematic utility of alternative $a$ to agent $n$ and is specified a (linear) function of model parameters $\bm{\beta}$, attributes, and socioeconomic variables. The probability of choosing some alternative $j \in \mathcal{C}$ can be found by conditioning on the path from the root $r$ to the leaf node $j$. Let $\{r\xrightarrow{}j\}_{\mathbb{T}}$ denote the set of all possible paths from $r$ to $j \in \mathcal{C}$ on graph $\mathbb{T}$. If $\mathbb{T}$ is a tree, the path is unique by definition (\citealp{korte2012combinatorial}). Formally, $\{r\xrightarrow{}j\}_{\mathbb{T}}$ is a set of sets of ordered sequence of nodes visited on the path from $r$ to $j$. For any given path $l \in \{r\xrightarrow{}j\}_{\mathbb{T}}$ denote these nodes by $b^{(1)}_{l},...,b^{(s)}_{l}$ where $b^{(1)}_{l}=r$ and $b^{(s)}_{l}=j$, where $s$ is the length of the path $l$.
We can now write the log-likelihood as:
    \begin{align}
       \mathcal{L(}\mathbf{x,\bm{\beta},\bm{\mu}})&= \sum_{n\in\mathcal{I}} \sum_{a\in\mathcal{C}} \Big(c_{na} \sum_{l\in\{r\xrightarrow{}a\}}x_{l} \ln{{P}_n\{a|l\}}) \Big)     
    \end{align}

Where,\begin{align}
    \ln{{P}_n\{a|l\}}&= \mu_{b^{(s-1)}_{l}}V_{an} +\sum_{i=2}^{s-2}(\mu_{b^{(s-i)}_{l}}-\mu_{b^{(s-i+1)}_{l}})\Gamma_n\{b^{(s-i+1)}_{l}\}
    +(\mu_{r}-\mu_{b^{(2)}_{l}})\Gamma_n\{{b^{(2)}_{l}}\}-\mu_r\Gamma_n\{r\}
\end{align} and $\Gamma_n\{v\}$ is the inclusive value of internal node $v \in \mathcal{N} \cup \{r\}$ for agent $n$:
\begin{equation}
    \Gamma_n\{b\} =\frac{1}{\mu_{b}} \ln{ \big(\sum_{a \in \mathcal{C}} x_{ba}e^{\mu_{b}V_{an}}} + x_{bb'}\sum_{b' \in \mathcal{N}} e^{\mu_{b}\Gamma_n\{b'\}}  \big),
\end{equation}

The likelihood function (3.8) has an exponential number of terms and even for simple graphs can take exponential time to evaluate ``top-down" i.e., by enumerating all possible paths from the root to each alternative. However at valid tree solutions, there is a unique path from the root to each alternative. In such cases to compute the likelihood, for each individual $n$, one starts at the chosen alternative, $c_{na}$, and follows the alternative's ancestry adding to the utility of the alternative the inclusive values along the unique path to the root and scaling appropriately. We discuss matters regarding the efficient evaluation of the likelihood in Section 4.2.1.

\subsection{Constraints}
There are two main types of constraints: constraints that guarantee (i) a valid nesting tree and (ii) consistency with rational utility theory. There are also additional constraints on the structure of the tree $\mathbb{T}$. We discuss these individually in what follows. Crucially, we show that all the desired properties can be enforced using \textit{linear} constraints.
\subsubsection{Arborescence}
A tree with $m$ nodes must have exactly $m-1$ edges (otherwise the addition of an edge results in a cycle and the removal of an edge results in a disconnected graph). The following constraint guarantees that the total number of edges is one less than the total number of nodes.
\begin{align}
     \underbrace{\sum_{e \in \mathcal{{E}}}x_{e}}_\text{{Number of edges~}}=\underbrace{\big(\sum_{u \in \mathcal{N}}y_u + |\mathcal{C}| +1\big)}_\text{{Number of nodes~}} -1
\end{align}
To guarantee that the graph is cycle free, \textit{any} strict subset of the nodes of the tree must have at most one less edge than the number of nodes in said subset. The following of constraints takes any potential cycle and declares it illegal:
\begin{equation}
    \sum_{\{(u,v) \in \mathcal{E}: u\in \mathcal{A},  v\in \mathcal{A}\}} x_e \leq |\mathcal{A}| -1, \;\; \forall \mathcal{A}\subset \mathcal{V}
\end{equation}
This formulation is based on the sub-tour elimination constraints from the classic travelling salesman problem (\citealp{berstimas-book}). Since the number of subsets that can be formed from a given set is exponential in the size of the set, the number of cycle elimination constraints is exponential in the number of nodes of the tree. We discuss a very practical and efficient way of addressing this in Section 4.2.2 using ``lazy constraints".
\subsubsection{Scale Constraints}
In order for the estimated model to be consistent with utility maximization we require the scale parameters $\mu$ to increase with nesting level. The implication is that whenever $ x_{uv}=1$ we must have that $\mu_u\leq\mu_v$.
This can be enforced through the following constraints:
\begin{equation}
    \mu_u-\bar{\mu}(1-x_{uv})\leq \mu_v \; \; \forall \textnormal{ distinct } u,v \in \mathcal{N}
\end{equation}
where $\bar{\mu}$ is an upper bound on the scale parameters.

\subsubsection{Structure Constraints}
We describe a number of constraints on the structure of the graph. Let $\delta^{out}_u =\{(u,v)\in \mathcal{E} \}$ denote the set of edges that\textit{ originate} in node $u$, and similarly let $\delta^{in}_u =\{(v,u)\in \mathcal{E} \}$ denote the set of edges that \textit{terminate} in node $u$

Each leaf node must belong to one and only one nest. The sum of edges incident to leaf nodes must sum to unity:
\begin{equation}
    \sum_{e \in \delta_a^{in}} x_e =1 \; \; \forall a \in \mathcal{C}
\end{equation}
Next, if a nest node is included, it must have exactly one parent:
\begin{equation}
    \sum_{e \in \delta_v^{in}} x_e =y_v \; \; \forall v \in \mathcal{N}
\end{equation}
Furthermore, if a nest node is included it must have a directed edge to at least two nodes (so that it is not degenerate), and if it is not included it cannot make connections to other nodes:
\begin{equation}
    2\cdot y_u \leq \sum_{e \in \delta_u^{out}} x_e \leq |\mathcal{C}|-1\cdot y_u \; \; \forall u \in \mathcal{N}
\end{equation}
Similarly, the root node can not be degenerate and must connect to at least two nodes in the tree
\begin{equation}
    2 \leq \sum_{e \in \delta_r^{out}} x_e 
\end{equation}
If the tree has nest nodes, at least one nest node should be connected to the root:
\begin{align}
    1-(1-\delta) \leq \sum_{u \in \mathcal{N}} x_{ru}\\
    \sum_{u \in \mathcal{N}} y_u \leq |\mathcal{C}|-2 \cdot \delta,
\end{align}
where $\delta \in \{0,1\}$ is a decision variable equal to one if there is at least one nest node in the tree.\\
Choice nodes must be leaf nodes and cannot have originating edges:
\begin{equation}
    \sum_{e \in \delta_a^{out}} x_e =0 \; \; \forall a \in \mathcal{C}
\end{equation}
The root node $r$, by definition, can not have incident edges:
\begin{equation}
    \sum_{e \in \delta_r^{out}} x_e =0 \; \;
\end{equation}
Finally, we disallow self-arcs:
\begin{equation}
    x_{uu} =0 \; \; \forall u \in \mathcal{V}
\end{equation}
\subsubsection{Regularization Constraints}
Finally, we discuss enforcing constraints (3.4) and (3.5). The number of nests in tree $\mathbb{T}$ can be constrained, with great facility, to equal any given integer $M \in[0,p]$ through the following expression:
\begin{equation}
    \sum_{u \in \mathcal{N}} y_u = M
\end{equation}
Enforcing (3.5) is a little more involved. First we establish a relationship between nesting level and tree height. The height of a tree is the depth of its deepest node. The depth of a node is the number of edges on the longest downward path between the root $r$ and said node. We see that there is a one-to-one equivalence between nesting level and tree height. As an example, the tree shown in Figure 3 has 3 nesting levels (the root, nest $b_1$, and nest $b_2$). We can also count 3 edges on the path from $r$ to leaf node $a_4$ which is the deepest node of that tree, so the tree shown has height 3. To enforce a specific tree height $L$, there has to be (i) at least one path between the root $r$ and some leaf node $a\in \mathcal{C}$ with $L$ edges and (ii) no other path between the root and any node has strictly more than $L$ edges. 

Starting with (i), we notice that since any path with more than one edge from the root to a leaf node must pass through some intermediate nest nodes, the requirement that ``at least one downward path between the root $r$ and some leaf node $a\in \mathcal{C}$ has $L$ edges" can be expressed as ``at least one downward path between the root $r$ and some nest node $v\in \mathcal{N}$ has $L-1$ edges". Since the nests do not have any intrinsic labels, we can enforce a path of this length, without loss of generality, using any $L-2$ nest nodes. For convenience we choose the lexicographically first $L-2$ nest nodes. Suppose the nest node labels are ordered from $nest_1,...,nest_p$, we enforce (i) as follows:

\begin{align}
    x_{r,nest_1}=1, x_{nest_i,nest_{i+1}}=1 \; \; i=1,...,L-3
\end{align}

Next, enforcing (ii) in a direct way requires enumerating all possible paths from the root to the leaf nodes with more than $L$ edges and excluding those paths from the feasible set. Clearly there is an exponential number of such paths and the direct method is untenable. Instead we enforce (ii) in an indirect way using ``lazy constraints". In short the optimization problem is first solved without enforcing (ii). If the solution includes a violating path, such path is excluded and the problem is resolved. More details can be found in Section 4.2.2.
\subsection{Regularization}\label{ch1:opts}
Regularization is a key concept in machine learning techniques. The idea is to appropriately penalize complexity in the model to avoid fitting to noise (i.e., over-fitting). There are two questions here: \textit{what }to penalize and by \textit{what amount}. The optimal amount of penalty can be decided by evaluating the model subjected to various levels of regularization on a hold-out dataset. A model with an optimal amount of penalty does best on data the training procedure hasn't seen. This is a technique in machine learning known as cross-validation; see \cite{bishop2006pattern} for more details. In regards to what to penalize, there are two ways of making a nested logit model more complicated (i) increasing the number of nests (leading to more scale parameters to be estimated) (ii) increasing the nesting level (resulting in more non-zeros in the variance covariance matrix of the error terms).

Typically the training likelihood is a non-decreasing function of complexity. For example, adding regressors to a linear regression model cannot worsen the training likelihood. This is because the training procedure can simply set the coefficients of the added regressors to zero and obtain the same likelihood as a model without the additional regressors.
In the nested logit structure learning problem however it turns out that we should \textit{not} expect any trends on the likelihood of the training dataset. This is somewhat counter-intuitive and is due to the fact that we disallow degenerate nests (see constraint 3.16). For more details, refer to Section A.1 of the appendix.
      \section{Solution Algorithm}
 \setcounter{equation}{0}
 
%-Donald Knuth
Section 3 introduced a formulation of the nested logit structure learning problem (NLSLP) as a mixed integer non-linear program with linear constraints. In this section we find a practical solution algorithm to the NLSLP as introduced. The NLSLP brings with it several challenges which ultimately shape our approach to finding a practical solution method.

First, the likelihood function can only evaluated at tree solutions. This is because the inclusive values (3.10) are defined recursively, and therefore the presence of any cycles will introduce circular references. Furthermore, since the likelihood function, as written in (3.8), is defined in terms of all possible tree paths from the root to each of the leaf nodes, it is not possible to explicitly load the entire function on a computer for problems of practical size. We consider a very efficient method in Section 4.2.1 for evaluating the likelihood function at a tree solutions \textit{without} enumerating all possible paths from the root as the closed form expression for the likelihood might initially suggest. The inclusive values introduce another complication, namely the coupling together of all the model parameters. This precludes the possibility of using local search algorithms that rely on evaluating the effect on the likelihood of the inclusion or exclusion of an edge. In other words, the likelihood can not be decomposed by ``edge effects". If that were the case, it would suffice to use, for example, a maximum spanning tree algorithm to arrive at the optimal structure as in \cite{janhunen2017learning}.

Second, the likelihood function we seek to maximize (3.1) is jointly non-concave in the discrete and continuous decision variables. Once the discrete decision variables are fixed, however, the problem is reduced to the usual nested logit model estimation, which has been studied extensively in the econometrics literature (\citealp{hensher2002specification}) and (\citealp{brownstone1989efficient}), with the addition of the linear scale constraints for which several optimization techniques already exist including Lagrange multipliers methods (\citealp{bertsekas1982constrained}) and conjugate gradient methods (\citealp{goldfarb1968conjugate}). 

Third, the number of constraints is exponential in the number of nodes of the graph. Recall that the cycle elimination  constraints (3.12) are applied to every subset of the nodes of the graph. Furthermore, our regularization framework requires enforcing a specific tree height which also involves, in theory, an exponential number of constraints. We discuss an efficient workaround  using ``lazy constraints" in Section 4.2.2.

Finally, it is not possible to evaluate the likelihood function when the discrete variables are relaxed (i.e., allowed to take on continuous values). Methods in the literature of obtaining concave relaxations of functions rely on the ability to evaluate the likelihood at relaxations of the discrete variables; see for example \cite{barton-mcormick}. In the NLSLP however, the discrete decision variables represent structure and can not be relaxed-- an edge or a nest is either present or not.

In lieu of the above, we find that the most appropriate solution methodology is through a variation of the linear outer approximation algorithm introduced by \cite{duran1986outer} and later developed by \cite{fletcher-1994}. We view the discrete decision variables as \textit{complicating variables} and instead of solving the optimization problem in ``one go'', we iteratively solve two easier sub-problems. The first sub-problem deals with estimating the nested logit model parameters for a fixed tree. The second sub-problem finds the most promising tree structure to pivot to at every iteration.
We discuss this procedure at length in the next section.

\subsection{General Algorithm Overview}
Denote the cardinality of the choice set $\mathcal{C}$ by $m$. Let $q$ denote the number of parameters in the systematic component of the utility equations. Let $f(\mathbf{x},\bm{\beta, \mu})=- \mathcal{L(}\mathbf{x},\bm{\beta, \mu })$. The NLSLP can be reformulated as the following optimization problem:
    \begin{align}
    \mathbf{z^*}= \mathbf{\min_{\mathbf{x},\mathbf{y},\bm{\beta, \mu}}} \; & f(\mathbf{x},\bm{\beta, \mu})\\
    \mathbf{s.t. \;} & \mathbf{x},\mathbf{y} \in \mathcal{T} \\
    & \textbf{C}\mathbf{x}+\textbf{B}\bm{\mu} \leq \textbf{d} \\
    & \bm{\mu} \in \mathbbm{R}^{m-1}, \bm{\beta} \in \mathbbm{R}^q \\
    &\textbf{x} \in \{0,1\}^{2m-1 \times 2m-1}\\
    &\textbf{y} \in \{0,1\}^{m-2}
\end{align} 
Where $\mathcal{T}$ is the set of binary vectors $(\textbf{x} ,\textbf{y} )$ that satisfy the arborescence (3.11-3.12), structural (3.14-3.22) and regularization constraints (3.23) and (3.5), and for some suitably defined matrices \textbf{C} and \textbf{B} and vector \textbf{d} that describe the scale constraints (3.13).

Now, for any feasible tree solution $\mathbf{x}^{(k)}$, we define the nonlinear sub-problem, NLP$^{(k)}$, as 
    \begin{align}
    \mathbf{z_{NLP}}(\mathbf{x}^{(k)})= \mathbf{\min_{\bm{\beta, \mu}}} \; & f(\mathbf{x}^{(k)},\bm{\beta, \mu})\\
    \mathbf{s.t. \;} & \textbf{C}\mathbf{x}^{(k)}+\textbf{B}\bm{\mu} \leq \textbf{d }\\
    & \bm{\mu} \in \mathbbm{R}^{m-1}, \bm{\beta} \in \mathbbm{R}^q
\end{align} 
The nonlinear sub-problem NLP$^{(k)}$ finds optimal parameters $\bm{\beta, \mu}$ for a given tree $\mathbf{x}^{(k)}$. This problem is simply a nested logit estimation problem with the addition of scale constraints. As the feasible set of NLP$^{(k)}$ is a subset of the feasible set of the original problem, we have that for all $\mathbf{x}^{(k)} \in \mathcal{T}$,
 \begin{equation}
    \mathbf{z^*} \leq \mathbf{z_{NLP}}(\mathbf{x}^{(k)})
\end{equation}
In other words, the solution to any of the non-linear sub-problems NLP$^{(k)}$  provides a rigorous upper bound on the objective function value of $\mathbf{z^*}$. We refer to this problem as the upper bounding sub-problem. 

Next, we approximate the function $f$, as the maximum of its linear approximations around a set of feasible solutions $\mathcal{O}^{(k)} =\{(\mathbf{x}^{(1)},\bm{\beta}^{(1)},\bm{ \mu}^{(1)}),...,(\mathbf{x}^{(k)},\bm{\beta}^{(k)},\bm{ \mu}^{(k)})\}$.
If $f$ were a convex function then the following problem, called the ``master" mixed-integer linear program (MILP), always provides a lower bound to the original optimization problem, i.e.,
 \begin{equation}
    \mathbf{z_{MILP}}^{(k)} \leq \mathbf{z^*}
\end{equation}
The ``lower bounding" MILP master problem is given by:
\begin{align}
   \mathbf{z_{MILP}}^{(k)}= &\mathbf{\min_{\eta, \mathbf{x},\mathbf{y},\bm{\beta, \mu}}} \;  \eta\\
    \mathbf{s.t. \;}& \eta \geq f(\mathbf{x}^{(i)},\bm{\beta}^{(i)},\bm{ \mu}^{(i)}) + \nabla f(\mathbf{x}^{(i)},\bm{\beta}^{(i)},\bm{ \mu}^{(i)})^T \begin{bmatrix}\mathbf{x}-\mathbf{x}^{(i)} \\ \bm{\beta}-\bm{\beta}^{(i)}  \\ \bm{\mu}-\bm{\mu}^{(i)}\end{bmatrix} \forall (\mathbf{x}^{(i)},\bm{\beta}^{(i)},\bm{ \mu}^{(i)}) \in \mathcal{O}^{(k)}\\
            & \mathbf{x} \in \mathcal{T} \\
& \textbf{C}\mathbf{x}+\textbf{B}\bm{\mu} \leq \textbf{d }\\
    & \bm{\mu} \in \mathbbm{R}^{m-1}, \bm{\beta} \in \mathbbm{R}^q \\
       &\textbf{x} \in \{0,1\}^{2m-1 \times 2m-1}\\
    &\textbf{y} \in \{0,1\}^{m-2}
\end{align}

The representation above is the so-called epigraph formulation, where the objective function $f$ is moved out of the objective into the feasible set. If $f$ is convex the linearizations around the set of points $\mathcal{O}^{(k)}$ overestimate the feasible region and we obtain a lower bound on the objective function value as stated in (4.1). $f$ is not convex (since $-f= \mathcal{L}$ is not concave), and the tangent hyper-planes (4.13) are not necessarily global under-estimators of $f$. Consequently, the MILP problem above may cut off regions of the feasible space. An established heuristic to overcome this is to allow the linearizations to move away from the feasible region. This is done through the use of artificial non-negative variables that are penalized in the objective; see \cite{grossmann-1990}.  
%\begin{figure}[h]
 %   \centering
  %  \includegraphics[scale=0.65]{MILP-cuts.pdf}
   % \caption{Suppose a linearization cut is added around the red point. Te feasible region shown is not a convex set, and the hyperplane may cut off points of the feasible set. This can be avoided if the cut is translated appropriately.}
    %\label{fig:my_label}
%\end{figure}

In summary, the linear outer approximation solves the original optimization problem (4.1)-(4.6) by iteratively solving a sequence of two easier problems: an MILP master problem (4.12)-(4.18) and an NLP sub-problem (4.7)-(4.9).
Algorithm 1 describes the steps of the linear outer approximation procedure as it applies to the NLSLP.
\begin{algorithm}
\SetAlgoLined
 \setcounter{bean}{0}
       \begin{center}
\begin{list}
{\textsc{Step} \arabic{bean}.}{\usecounter{bean}}   
    \item Find an initial tree solution  with $M$ nests and $L$ nesting levels, call it $\mathbf{x}^{(1)}$.
    \item Estimate the taste $\bm{\beta}^{(1)}$ and scale $\bm{\mu}^{(1)}$ parameters for the tree $\mathbf{x}^{(1)}$ found in step 1 by solving NLP$^{(1)}$. 
    \item Form a set of visited solutions $\mathcal{O}^{(1)} =\{(\mathbf{x}^{(1)},\bm{\beta}^{(1)},\bm{ \mu}^{(1)})\}$. Evaluate $f(\mathbf{x}^{(1)},\bm{\beta}^{(1)},\bm{ \mu}^{(1)})$ and $\nabla f(\mathbf{x}^{(1)},\bm{\beta}^{(1)},\bm{ \mu}^{(1)})$  and form the linearization constraints (4.13). 
    \item Solve the Master MILP$^{(1)}$ (4.12)-(4.18) to obtain a new solution  $\mathbf{x}^{(2)}$ .
    \item Solve NLP$^{(2)}$, augment the set of visited solutions with the newly optimal solution $\mathcal{O}^{(2)}=\mathcal{O}^{(1)} \cup \{(\mathbf{x}^{(2)},\bm{\beta}^{(2)},\bm{ \mu}^{(2)})\}$, form the linearization constraints (4.13)
    \item Solve MILP$^{(2)}$ to pivot to a new solution.
    \item Continue iterating between NLP$^{(k)}$ and MILP$^{(k)}$ until the MILP problem is infeasible or a termination criteria is met.
    \item Evaluate the likelihood (3.8) on training and validation data-sets.
       \item Repeat steps 1 to 8 for all feasible combinations of $M$ and $L$.
       \item Choose the specification with the best validation likelihood.
 \item Estimate a nested logit model with optimal nesting structure specification as determined from step 10 on the full dataset.
    \vspace{-\baselineskip}\mbox{}
    \end{list}
       \end{center}
 \caption{Linear Outer Approximation for the NLSLP}
\end{algorithm}

A few remarks on Algorithm 1 are in order.
It is easy to find an initial tree solution with the desired number of nests and nesting levels to start the algorithm in step 1. Algorithmically, this can be done by modifying the MILP by replacing (4.13) with the constraint $\eta \geq 0$ and removing constraints (4.15) and (4.16). 

Whenever the MILP is solved additional constraints are added to cut-off previously found trees and all trees in their ``equivalence class" to guarantee finite convergence and prevent cycling behavior. In our implementation the nests are labeled, however the nest labels have no effect on the likelihood. When cutting a previously visited tree, one must also cut, from the feasible set, all trees in its equivalence class, i.e., all trees such that when the nest labels are removed, the resulting tree structure is the same. The exact form of the cuts is discussed in Section 4.2.2.

 In practice, the linear outer approximation may take a large number of iterations to converge. Typically however, the majority of the optimality gap (the difference in objective function value between the NLP and MILP) is closed during the first few iterations, and a commonly used termination criteria is iteration limit. Another criteria is the worsening of the objective function value of two successive NLP sub-problems \cite{grossmann-1986}. We use an iteration limit in our implementation. 

As linearizations are added in step 5, the master MILP becomes an improved approximation of the original optimization problem. Convergence to an optimum occurs when the value of the master MILP is worse that the value associated with the NLP subproblem, and the optimum is guaranteed to be a global optimum if the function $f$ is convex. Since $f$ is not a convex function convergence to a global optimum cannot be guaranteed.

\subsection{Practical Matters}
In this section we discuss a few practical implementation details. Crucial to the outer approximation algorithm is the ability to evaluate the value of the function $f$ and its gradients $\nabla f$ at points in the feasible set. Since the likelihood function (3.8) is defined in terms of all possible paths from the root to the leaf nodes, direct evaluation of this function is prohibitive for any choice set where the number of alternatives is not very small. Fortunately, the tree structure of the problem can be exploited as a workaround as we see in Section 4.2.1.

In Section 4.2.2, we discuss an efficient method for dealing with the exponentially many constraints in the master MILP. We end with a note on the code implementation of Algorithm 1 in section 4.2.3.

\subsubsection{Evaluating the likelihood and its gradients}
\paragraph{Efficient evaluation of the likelihood function}
Evaluating the likelihood function (3.8), ``top-down" would require enumerating all paths $l\in\{r\xrightarrow{}a\}$ for each $a \in \mathcal{C}$ - which is a prohibitive task for choice sets of practical size. In fact a short proof (see Section A.3 of the Appendix) reveals that the total number of terms in (3.8) is $|\mathcal{I}|\cdot|\mathcal{C}|\cdot\floor*{|\mathcal{C}|! \cdot e}$.
Instead, consider the following algorithm for efficiently computing the term $\Big(c_{na} \sum_{l\in\{r\xrightarrow{}a\}}x_{l} \ln{{P}(a|l)} \Big)$ in (3.8) at \textit{tree solutions} for a fixed $n \in \mathcal{I}$ and $a \in \mathcal{C}$. At such solutions, there is a single unique active path $l\in\{r\xrightarrow{}a\}$ for each $a\in \mathcal{C}$. Fix $a\in \mathcal{C}$, and denote its path by the set of nodes $b^{(1)}_{l},...,b^{(s)}_{l}$ where $b^{(1)}_{l}=r$ and $b^{(s)}_{l}=a$, where $s$ is the length of the path $l$ which we do not necessarily know \textit{a priori}. Algorithm 2 describes an efficient method of calculating the likelihood (3.8) at tree solutions:

\begin{algorithm}
\SetAlgoLined
 \setcounter{bean}{0}
       \begin{center}
\begin{list}
{\textsc{Step} \arabic{bean}.}{\usecounter{bean}}  
    \item For each $n\in \mathcal{I}$ and $a\in \mathcal{C}$ do steps 2 to 5.
    \item If $c_{na}=1$ continue to step 2, otherwise the contribution to the likelihood is zero.
    \item Start at a leaf node $a$ and propagate to the node's parent $B(a)$. Add to the likelihood, the quantity $\mu_{B(a)}V_{an}$.
    \item If the current node is the root node add the quantity $-\mu_r\Gamma_n\{r\}$ to the likelihood and stop. Otherwise, propagate to the current node's parent $B(B(a))$ and add the following quantity to the likelihood $(\mu_{B(B(a))}-\mu_{B(a)})\Gamma_n\{B(a)\}$. Set the current node to $B(a)$.
    \item Continue adding contributions as in step 4 until the root node is reached.
    \vspace{-\baselineskip}\mbox{}
    \end{list}
       \end{center}
 \caption{Efficient evaluation of the likelihood at tree solutions}
\end{algorithm}

\paragraph{Computing gradients}
Central to the linear outer approximation algorithm is the availability of gradients of the likelihood function at specified tree solutions. We make a distinction here between the continuous variables $\bm{\beta}$ and $\bm{\mu}$, and the discrete variables $\mathbf{x}$. 

Derivatives of the likelihood function $\mathcal{L}$ with respect to continuous variables can be computed analytically or through automatic differentiation (\citealp{baydin2017automatic}). Automatic differentiation however, can not reliably handle derivatives with respect to discrete decision variables. We resort to analytical differentiation and find that closed form derivatives exist, and can be efficiently evaluated at tree solutions using variations of Algorithm 2. These derivatives are somewhat involved, the complete derivations can be found in Section A.2 of the appendix.

\subsubsection{Lazy constraints and formulations}
Lazy constraints make it possible to leave out constraints from an optimization problem that must be satisfied by any valid solution but which may not be required by the solution algorithm to arrive at the optimal solution -- resulting in a smaller problem that is easier to solve. Only those constraints that are required for finding the optimal solution are included, and only when they are needed (\citealp{pearce2019towards}). 

In the NLSLP, the cycle elimination constraints and the anti-cycling constraints are exponentially sized in the cardinality of the choice set. The tree height constraints are difficult to explicitly describe but a violating solution can be easily detected and removed from the feasible set. We describe lazy constraint formulations to address these points.

\paragraph{Cycle elimination constraints}
The cycle elimination constraints (3.12) are necessary to avoid passing non-tree solutions to the NLP sub-problem. There is one such constraint for every strict subset of the nodes of the graph, consequently if a graph has $N$ nodes there is $2^N-1$ of these constraints. Generating and adding these constraints at once to the MILP is not practical. Instead we take a lazy constraint approach. Since it is easy to detect the presence of cycles in current solutions and produce constraints to remove such solutions from the feasible set we generate and add only the required cycle elimination constraints ``on the fly" as needed. 

The MILP problem is first solved without the cycle elimination constraints. A ``separation oracle" is then used to determine which of cycle elimination constraints are violated. \textit{Only} the violated constraints are added and the MILP is re-solved. It is important that the separation oracle is efficient at detecting constraint violations. A naive exponential time oracle will generate and check each of the cycle elimination constraints for violations. Instead, it suffices to run a depth first traversal of the graph and check if that yields any back edges. This algorithm has a polynomial worst-case run time; see \cite{cormen2009introduction}. 

\paragraph{Anti-cycling constraints} Since nest labels have no effect on the likelihood, there is in fact an equivalence class of tree solutions. As an example Figure 4 shows two trees representing the same nesting structure with different labeling of the nests. Visiting either of the trees necessitates removing the other from pivot consideration in step 6 of Algorithm 1. In general suppose a tree solution has $N$ nests, then there exists $N!$ trees in its equivalency class (corresponding to all possible permutations of the nest labels). It is therefore crucial to avoid pivoting to trees that belong to the equivalence classes of any previously visited solution (otherwise Algorithm 1 may cycle through all $N!$ trees for a given class before pivoting to a truly new solution). All trees in the same equivalence class share a common ``signature" -namely the ancestry of the elementary alternatives. The ancestry of a current solution is used to determine if it belongs to the equivalence class of a tree that has been previously visited. That solution is then removed from the feasible set (4.14) accordingly through the constraints we now describe.

Suppose we wish to exclude a particular tree solution $\textbf{x}$ from the feasible set. Define the index sets 
$O =\{i: x_i=1\}$ and $Z=\{i: x_i=0\}$.
The following constraint removes the solution $\textbf{x}$ from the feasible set:
\begin{align}
    \sum_{i\in O}x_i - \sum_{i \in Z} x_i\leq |O| -1
\end{align}
Again, we take a lazy constraint approach. These cuts are not generated at once but are instead added on the fly as needed. 
\paragraph{Tree height constraints} The regularization framework (3.4)-(3.5) of the NLSLP involves limiting the nesting levels (or equivalently the height of the nesting tree). We noted in section 3.2.4 that to enforce a specific tree height $L$, we require that (i) at least one path between the root $r$ and some leaf node $a\in \mathcal{C}$ has $L$ edges and (ii) no other path between the root and any node has strictly more than $L$ edges. We elaborate on the lazy constraint approach we apply to enforcing (ii).  
At a given tree solution we traverse to the root node starting from each of the elemental alternatives counting the number of edges along the way. If a path of length strictly greater than $L$ is found, the tree solution is eliminated from the feasible set using constraints (4.19).

\subsubsection{Implementation}
The nested logit structure learning problem has been implemented in the Julia programming language (\citealp{julia-2017}). The Gurboi solver (\citealp{gurobi2015gurobi}) is used to solve the MILP master problems. The JuMP interface (\citealp{DunningHuchetteLubin2017}) in Julia allows for user-defined lazy cuts, which is critical to our implementation as described in section 4.2.2. IPOPT (\citealp{wachter2006implementation}) is used to solve the NLP sub-problems. IPOPT applies an interior point algorithm to solve the linearly constrained NLP sub-problems. The authors make the source code accessible under an MIT licence through \url{github.com/ymedhat95/nested-logit.git}.

       \section{Computational Experiments}
 \setcounter{equation}{0}
 In this section we validate Algorithm 1 introduced in Section 4.1. The algorithm is used to learn an optimal nesting tree specification empirically from the data. In Section 5.1, we demonstrate, through a Monte Carlo experiment that Algorithm 1 can correctly recover the true tree structure as the sample size increases. In Section 5.2, we apply our algorithm a travel mode choice dataset from \cite{viegas2018modeling}.
 \subsection{Monte Carlo Experiments}
The purpose of this section is to demonstrate the statistical consistency of our methodology of learning a nested logit structure from the data. To this end, we simulate a synthetic choice scenario as follows. We consider a population of rational agents $n\in \mathcal{I}$ each making a decision from a set of eight alternatives $\mathcal{C}=\{a_1,...,a_8\}$. The utility of alternative $a\in \mathcal{C}$ to agent $n\in \mathcal{I}$ is given through the following utility equations:
\begin{equation}
    U_{na}=ASC_{a}+\epsilon_{na} 
\end{equation}
Where $ASC_{a}$ is an alternative specific constant for alternative $a$.
The error terms, $\epsilon_{na}$, are independent across agents but are correlated for the same individual across the different alternatives. The correlation structure of the error terms is described by the tree shown in Figure 5. The nesting tree and associated scale parameters shown in Figure 5 imply a variance-covariance matrix for the joint distribution of the error terms, $\epsilon_{na_1},...,\epsilon_{na_8}$ for a given agent. This variance-covariance matrix and values for the alternative specific constants are shown in Table 1. For identification purposes the ASC of the first alternative is normalized to zero. Two blocks of non-zero covariances (highlighted) can be seen in Table 1. The goal is to use Algorithm 1 to correctly recover the values for the ASCs and the elements of the variance-covariance matrix. 

Synthetic data was generated according to the choice scenario and parameters just described for 35 repetitions for various sample sizes of agents: (a) 25,000 (b) 10,000 and (c) 5000. For every repetition we used Algorithm 1 to learn the nesting tree, its parameters and estimate the ASCs from the data. We then used Proposition 2.1 to obtain the equivalent variance-covariance matrix representation of the estimated tree structure and scale parameters. The estimated variance-covariance matrix and ASCs where then averaged over the 35 different repetitions to obtain the means and standard errors for the various sample sizes. 

The results are shown in Table 2. For the largest sample size, (a), we can see that Algorithm 1 correctly recovers the underlying tree structure and its parameters in all 35 repetitions. For (b) the algorithm sometimes misses nests $n_1$ and $n_2$ thus shrinking some of the significant covariances. Once or twice, the algorithm captures some spurious correlations as indicated by the very small non-zeros shown in the unshaded blocks with large standard errors. Similar but more pronounced effects are observed for the smallest sample size (c).

\subsection{Empirical Application}
Data from the 2010 Massachusetts Travel Survey (MTS) and
matrices for car and transit travel times and costs, provided by Boston’s Central Transportation Planning Staff (CTPS) (\citealp{de2018modelling}), are used to estimate a nested logit choice model for the work travel mode. Individuals were asked to fill out all activities performed on a designated weekday, and to provide the activity location and the transport mode used to arrive at this location. The survey also collected individual and household characteristics for participants. There are six main travel modes reported in the survey: \textit{Walk}, \textit{Bike},\textit{ Car}, \textit{Car Pool (2 people)}, \textit{Car Pool (3+ people)}, and \textit{Transit}.

The systematic utilities for each of these travel modes are specified in Table 3. The specification is linear in parameters and includes a constant (intercept). The travel mode attributes travel time, travel cost, an indicator of whether the trip is made to the central business district (CBD) of Boston as well as traveler's income and gender also enter the utility equations. There is an alternative specific coefficient $\beta$ for each travel mode attribute and each traveler characteristic. For purposes of identification, all alternative specific coefficients for the walk alternative (except the travel time coefficient) are fixed to zero. All six alternative specific travel time coefficients are identifiable because the travel time varies over the six alternatives, this is not true of the other variables (e.g. income). A normalization is therefore needed. 

The goal is to estimate a nested logit model (structure and parameters), consistent with utility maximization, from the data. Algorithm 1 is used. Table 4 shows the training and validation negative log-likelihood at converged solutions for all feasible combinations of the regularization parameters $M$ and $L$ in Algorithm 1. The best performing model on the validation dataset is the tree with $4$ nests and $5$ nesting levels shown in Figure 6 (left). The model with the best validation log likelihood is chosen. The nested logit model was then estimated for this tree on the full dataset. The estimated parameters are shown in Table 5.

The estimated alternative specific constants $C$ show that, \textit{ceteris paribus}, the bike, car pooling, and transit travel modes are less preferred than the walk mode, while the car mode is slightly more preferred. Travelers are most sensitive to walk and bike travel times and least sensitive to car travel time. The possession of a transit pass makes a traveler more likely to bike or take transit and less likely to travel by car or car pool than walk. Trips to the CBD are more likely to be made by transit and less likely to be made by car, car pooling or bike relative to walk. 

The complicated nesting structure in Figure 6 (left) reveals nonzero correlations between the error terms in the utility equations of all the travel modes except for the $carpool2$ mode. Looking at the estimated scale parameters on the full dataset shown in Table 5 we notice that the scale parameter for $n_1$, 1.215, is not significantly different from the scale parameter for the root node which is normalized to 1. Consequently $n_1$ can be collapsed to the root node. Similarly the scale parameter for $n_3$ is not statistically different from the scale parameter of $n_2$ and the former nest can be collapsed into the latter. The resulting tree shown in Figure 6 (right) is simpler and an attempt at ``labeling" the nests has be made. The final nesting tree groups the car, transit, walk, and bike modes in a nest ``Traditional" modes of transportation. Within that nest, the walk and bike modes are grouped into another nest for ``Active" travel modes.

Figure 7 shows two alternative models that would typically be estimated for a travel model choice on the basis of intuition. The first model, Model A, groups the walk and bike travel modes into an ``Active" nest and the car and car pooling modes into an ``Auto" nest. The second model, Model B, groups transit and the car pooling modes into a ``Shared" nest and the walk, bike, and car travel modes in to an ``Unshared" nest. Table 6 compares the fit on the data of these two models to the one obtained through Algorithm 1. The multinomial logit (no nesting) model is used as a benchmark. 

It can be seen that the model obtained through Algorithm 1 provides a much superior fit than the rest of the models. A total of 45 nesting trees visited during the search for the this tree (this is the number of NLP sub-problems solved in Algorithm 1), a small fraction of the 2712 possible non-degenerate nesting trees that could be formed with 6 alternatives. 

       \section{Concluding Remarks}
 \setcounter{equation}{0}
 
 Since its introduction, the nested logit model has been found to be extremely flexible with a myriad of applications in economics, marketing, and transportation. A large number of alternative nesting structures are possible in any choice context in which the number of alternatives is not very small. An appropriate method for systematically choosing an optimal nesting structure has not yet appeared in the literature. The dependency of the optimal structure on the systematic specification complicates this task yet further. While \textit{a priori} reasoning may be of some help, results in the literature have suggested that intuition is an imperfect guide. 
 
 In this work we provide an optimization-based framework for learning an optimal nesting structure from the data. We formally introduced the nested logit structure learning problem (NLSLP) and provided a complete solution algorithm in addition to an open-source code implementation available to practitioners. Our holistic view of nested logit estimation takes into consideration the dependency of the optimal structure on the systematic specification of the utility equations. While the optimization may appear intractable at first glance, we have found ways to exploit the tree structure of the problem and utilized the state-of-the art in algorithmic development in solving mixed integer optimization problems and lazy constraints to bring practical tractability to the NLSLP.
 
We have demonstrated the efficacy of our algorithm by applying it to a synthetic dataset where the correct nesting tree structure was successfully recovered. Finally, we demonstrated an empirical application to a travel mode choice dataset.  

%In advocating for a data-driven approach for learning a nested logit structure, we are in no way diminishing the role of the modeler or the importance of domain-specific knowledge in specifying good discrete choice models. The dependency of the nesting structure on the systematic component means that intuition can not be used. We are simply providing a systematic and tractable methodology for what the modeler would ideally like to have done: compare are possible nesting tree specifications that are consistent with rational utility and choose the best one based on a fit criterion. 

    %\bibliographystyle{chicago}
    
    \bibliography{references} 
\newpage

\begin{table}[h]
\begin{center}
\caption{Parameter values of the alternative specific constants and the variance-covariances of the error terms used to generate the synthetic data.}
\centering
\includegraphics[scale=0.45]{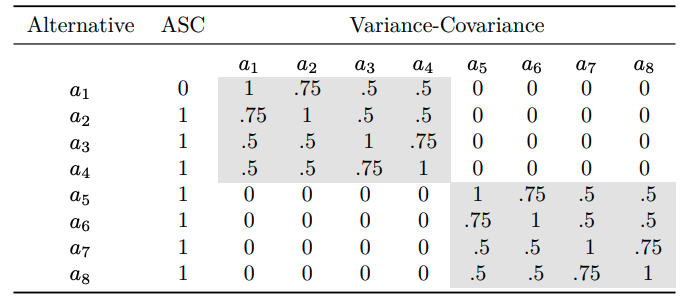}
\end{center}
\footnotesize
\renewcommand{\baselineskip}{11pt}
\textbf{Note:} {The variance-covariance values are given by $\frac{\pi^2}{6}$ times the entries shown in the table (c.f. Proposition 2.1).}
\noindent
\end{table}

\begin{table}
\begin{center}
  \caption{ Mean parameter values and standard errors for the alternative specific constants and the recovered variance-covariance matrix for different sample sizes.}

\caption*{(a) $N_{train}=20000$, $N_{validation}=5000$}
\includegraphics[scale=0.45]{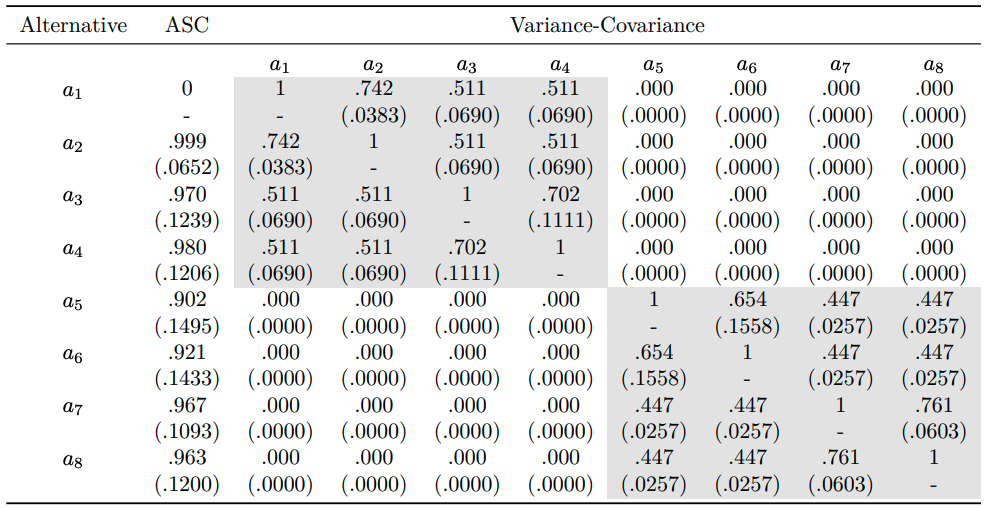}
\end{center}
\end{table}

\begin{table}[h]
\begin{center}
  \caption*{{\textbf{Table 2 }(continued):} Mean parameter values and standard errors for the alternative specific constants and the recovered variance-covariance matrix for different sample sizes.}

\caption*{(b) $N_{train}=7500$, $N_{validation}=2500$}
\includegraphics[scale=0.45]{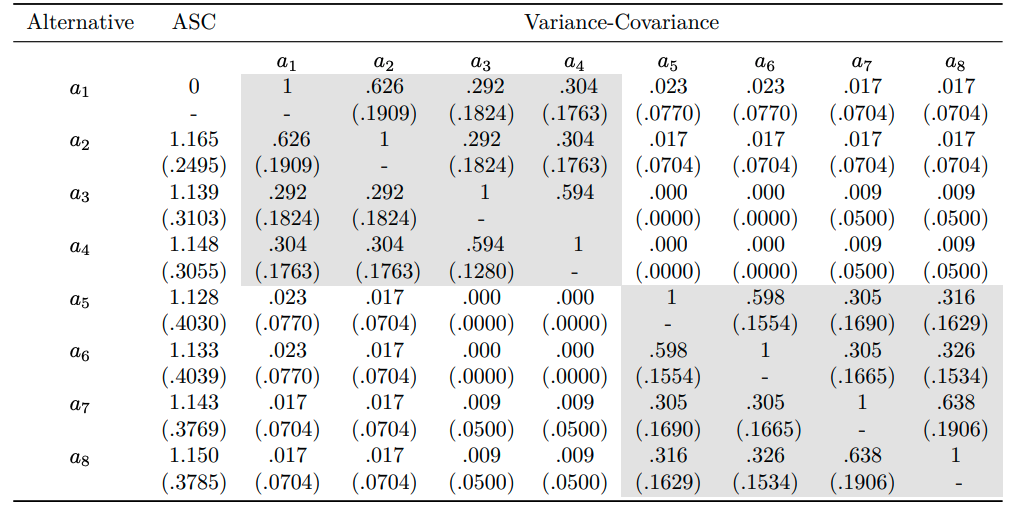}
\medskip

\caption*{(c) $N_{train}=3750$, $N_{validation}=1250$}
\includegraphics[scale=0.45]{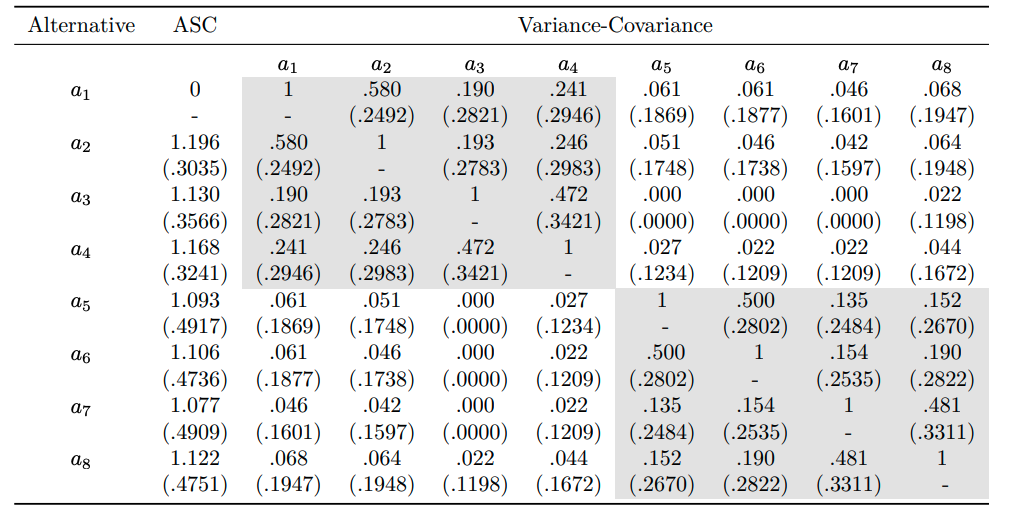}
\end{center}

\footnotesize
\renewcommand{\baselineskip}{11pt}
\textbf{Note:} { Means and standard errors where calculated across 35 repetitions for each sample size. Standard errors are shown in brackets below the corresponding mean. The variance-covariance values are given by $\frac{\pi^2}{6}$ times the entries shown in the table (c.f. Proposition 2.1). }
\noindent
\end{table}

\newpage

\begin{table}[h]

\caption{Systematic specification of the utility equations for the work travel mode choice model.}
\begin{center}

\begin{tabular}{cccccccccc}
\toprule
\multicolumn{1}{c}{} & \multicolumn{1}{c}{} & \multicolumn{7}{c}{Attributes \& Characteristics} \\

\multicolumn{1}{c}{} &  & Constant & Travel Time & Travel Cost & Transit Pass & CBD Trip & Income & Female  \\
\multicolumn{1}{c}{} & & $C$ & $\beta_{tt}$ & $\beta_{tc}$ & $\beta_{tp}$ & $\beta_{cbd}$ & $\beta_{inc}$ & $\beta_{female}$ \\
\midrule
\multirow{6}{*}{\rotatebox[origin=c]{90}{Alternative}}  & $walk$ & 0 &\checkmark & - & 0 & 0 & 0 & 0  \\
 & $bike$ & \checkmark & \checkmark & - & \checkmark & \checkmark & \checkmark & \checkmark   \\
 & $car$ & \checkmark & \checkmark & \checkmark & \checkmark &\checkmark &\checkmark & \checkmark \\
 & $carpool2$ & \checkmark& \checkmark & \checkmark & \checkmark & \checkmark & \checkmark &\checkmark  \\
 & $carpool3+$ &\checkmark&\checkmark & \checkmark& \checkmark & \checkmark &\checkmark & \checkmark \\
 & $transit$ &\checkmark & \checkmark &\checkmark& \checkmark & \checkmark & \checkmark &\checkmark \\
 \bottomrule
\end{tabular}
\end{center}

\footnotesize
\renewcommand{\baselineskip}{11pt}
\textbf{Note:} {A check mark indicates that the attribute or characteristic enter the utility function of the corresponding alternative linearly. A zero indicates that the corresponding parameter is fixed to zero for identification. All parameters are alternative specific.}
\end{table}

\begin{table}[h]
\caption{{Training and validation log-likelihoods at converged solutions for all feasible combinations of number of nests and nesting levels. The best validation log-likelihood is underlined.}}
\parbox{.45\linewidth}{
\centering
\caption*{Training Negative Log-likelihood $-\mathcal{L}$}
\includegraphics[scale=0.45]{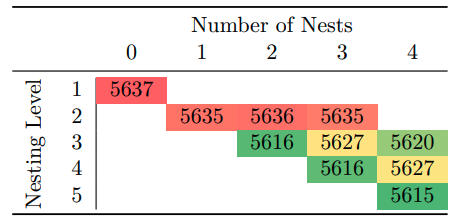}
}
\hfill
\parbox{.45\linewidth}{
\centering

\caption*{Validation Negative Log-likelihood $-\mathcal{L}$}
\includegraphics[scale=0.45]{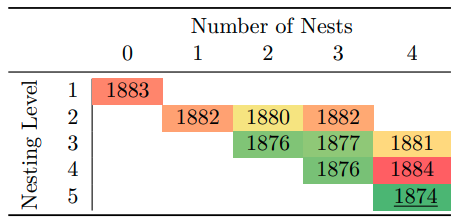}
}

\end{table}

\begin{table}[h]

\caption{Estimated parameters for the optimal nesting structure for work mode choice model.}
\begin{center}

\begin{tabular}{cccccccccccc}
\toprule
\multicolumn{1}{c}{} & \multicolumn{1}{c}{} & \multicolumn{10}{c}{Estimated Parameters} \\

\multicolumn{1}{c}{} &  & $C$ & $\beta_{tt}$ & $\beta_{tc}$ & $\beta_{tp}$ & $\beta_{cbd}$ & $\beta_{inc}$ & $\beta_{female}$ & \multicolumn{1}{c}{} &  & $\mu$ \\
\midrule
\multirow{12}{*}{\rotatebox[origin=c]{90}{Alternative}} & $walk$ & 0 & -2.364 &  & 0 & 0 & 0 & 0 &\multirow{12}{*}{\rotatebox[origin=c]{90}{Nest}} & $n_1$ & 1.215 \\
 &  & - & (.4466) &  & - & - & - & - &  &  & (.1498) \\
 & $bike$ & -.802 & -2.587 &  & .041 & -.307 & .006 & -.284 &  & $n_2$ & 1.747 \\
 &  & (.2478) & (.5203) &  & (.0701) & (.1301) & (.0054) & (.1215) &  &  & (.1954) \\
 & $car$ & .557 & -.399 & -.101 & -1.001 & -.977 & -.017 & .128 &  & $n_3$ & 1.847 \\
 &  & (.1258) & (.3016) & (.0329) & (.1533) & (.1664) & (.0065) & (.0871) &  &  & (.2966) \\
 & $carpool2$ & -1.478 & -.443 & -.265 & -1.074 & -1.022 & .010 & .423 &  & $n_4$ & 4.350 \\
 &  & (.1467) & (.4416) & (.0924) & (.1445) & (.1742) & (.0078) & (.1055) &  &  & (1.6591) \\
 & $carpool3+$ & -1.810 & -.454 & -.420 & -1.156 & -.926 & .010 & .416 &  &  & \multicolumn{1}{c}{} \\
 &  & (.2805) & (.5165) & (.1703) & (.1886) & (.2106) & (.0085) & (.1198) &  &  & \multicolumn{1}{c}{} \\
 & $transit$ & -.783 & -.840 & -.022 & .829 & .148 & -.022 & .113 &  &  & \multicolumn{1}{c}{} \\
 &  & (.1760) & (.1415) & (.0117) & (.1394) & (.1258) & (.0072) & (.0928) &  &  & \multicolumn{1}{c}{}\\
 \bottomrule
\end{tabular}
\end{center}

\footnotesize
\renewcommand{\baselineskip}{11pt}
\textbf{Note:} {The nesting specification for the estimated model is shown in Figure 6 (left) and the systematic specification is shown in Table 5. The scale parameter of the root node, $\mu_r$, is normalized to 1. Standard errors are shown in brackets below the corresponding estimate.}
\noindent
  \end{table}

\begin{table}[h]
\caption{Comparison of fit on the work travel mode choice dataset between different nesting specifications.}
\begin{tabular}{cccccc}
\toprule
\multirow{2}{*}{{Tree}} & \multirow{2}{*}{{No. of Nests}} & \multicolumn{3}{c}{{Negative Log-likelihood $-\mathcal{L}$}} & \multirow{2}{*}{{No. of Visited Trees}} \\
 &  & {Full dataset} & {Training} & {Validation} &  \\
 \midrule
Algorithm 1& 4 & 7492 & 5615 & 1874 & 45 \\
\textit{Figure 6 (left)} & & & & &\\
Model A & 2 & 7523& 5642 & 1887 & 1 \\
\textit{Figure 7 (left)} & & & & &\\
Model B & 2 & 7487 & 5617 & 1888 & 1 \\
\textit{Figure 7 (right)} & & & & &\\
Multinomial& 0 & 7511 & 5637 & 1883 & 1 \\
\midrule
\multicolumn{2}{c}{Sample Size} & 8936 & 6702 & 2234 & Total Possible: 2712\\
\bottomrule
\end{tabular}

\footnotesize
\renewcommand{\baselineskip}{11pt}
\textbf{Note:} {The training to validation split ratio is 3:1. The validation log-likelihood is obtained by evaluating the log-likelihood (3.8) on the validation data using parameters estimated on the training data.}
\noindent
\end{table}

\begin{figure}[h]
    \scalebox{0.9}{
  \begin{minipage}[b]{0.25\linewidth}
    \centering
\includegraphics[scale=0.48]{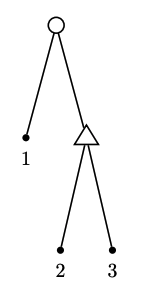}  \caption*{ $\{\{2,3\},\{1,2,3\}\}$\\}
  \end{minipage}

  \begin{minipage}[b]{0.25\linewidth}
    \centering
\includegraphics[scale=0.5]{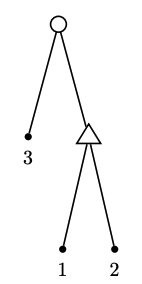} 
    \caption*{ $\{ \{1,2\},\{1,2,3\}\}$\\}
  \end{minipage}
  
    \begin{minipage}[b]{0.25\linewidth}
    \centering
\includegraphics[scale=0.5]{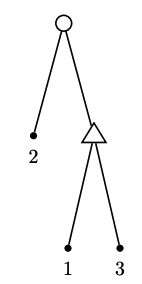} 
   \caption*{ $\{\{1,3\},\{1,2,3\}\}$\\}
  \end{minipage}
    \begin{minipage}[b]{0.25\linewidth}
    \centering

\includegraphics[scale=0.5]{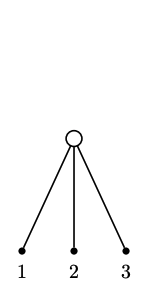} 
    \caption*{ $\{\{1,2,3\}\}$}
  \end{minipage}
  }
      \caption{The four possible non-degenerate nesting partitions for the set $\{1,2,3 \}$.}
\end{figure}

\begin{figure}[h]

\begin{minipage}{.5\textwidth}
    \centering

\includegraphics[scale=0.55]{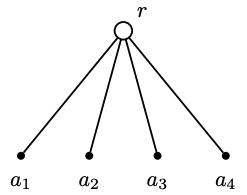} 
  
\end{minipage}
\begin{minipage}{.5\textwidth}

 \[\Sigma = \frac{\pi^2}{6\mu_r^2}
\begin{blockarray}{ccccc}
a_1 & a_2 & a_3 & a_4 & \\
\begin{block}{(cccc)c}
  1 & 0 & 0 & 0 &  a_1 \\
  0 & 1 & 0 & 0 &  a_2 \\
  0 & 0 & 1 & 0 &  a_3 \\
  0 & 0 & 0 & 1 &  a_4 \\
\end{block}
\end{blockarray}
 \]
\end{minipage}
\caption{The nested partition $\mathcal{B}=\{\{a_1,...,a_4\}\}$ over the set $\mathcal{C}=\{a_1,...,a_4\}$ and the equivalent nesting tree and variance-covariance matrix of the joint distribution of the error terms $\epsilon_{a_1},...,\epsilon_{a_4}$ for a multinomial logit model.}

\end{figure}

\begin{figure}[h]
 
\begin{minipage}{.5\textwidth}
    \centering

\includegraphics[scale=0.55]{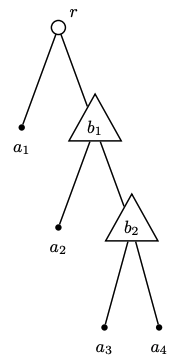} 
    
\end{minipage}
\begin{minipage}{.5\textwidth}

 \[\Sigma = \frac{\pi^2}{6\mu_{r}^2}
\begin{blockarray}{ccccc}
a_1 & a_2 & a_3 & a_4 & \\
\begin{block}{(cccc)c}
  1 & 0 & 0 & 0 &  a_1 \\
  0 & 1 & 1-\frac{\mu_{r}^2}{\mu_{b_1}^2} & 1-\frac{\mu_{r}^2}{\mu_{b_1}^2} &  a_2 \\
  0 & 1-\frac{\mu_{r}^2}{\mu_{b_1}^2} & 1 & 1-\frac{\mu_{r}^2}{\mu_{b_2}^2} &  a_3 \\
  0 & 1-\frac{\mu_{r}^2}{\mu_{b_1}^2} & 1-\frac{\mu_{r}^2}{\mu_{b_2}^2} & 1 &  a_4 \\
\end{block}
\end{blockarray}
 \]
\end{minipage}
\caption{The nested partition $\mathcal{B}=\{\{a_1,a_2,a_3,a_4\},\{a_1\},\{a_2,a_3,a_4\},\{a_3,a_4\}\}$ of the set $\mathcal{C}=\{a_1,a_2,a_3,a_4\}$ and the equivalent nesting tree and variance-covariance matrix of the joint distribution of the error terms $\epsilon_{a_1},...,\epsilon_{a_4}$ for a nested logit model.}
\end{figure}

\begin{figure}[h]
    \centering
\includegraphics[scale=0.55]{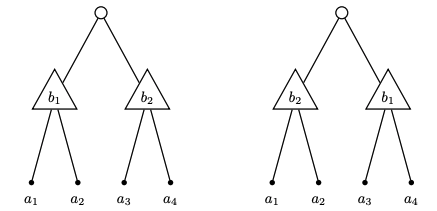}

\caption{Trees representing the same nesting structure with different nest labels. These two trees belong to the same equivalency class.}
  \end{figure}

\begin{figure}[h]
    \centering

\includegraphics[scale=0.48]{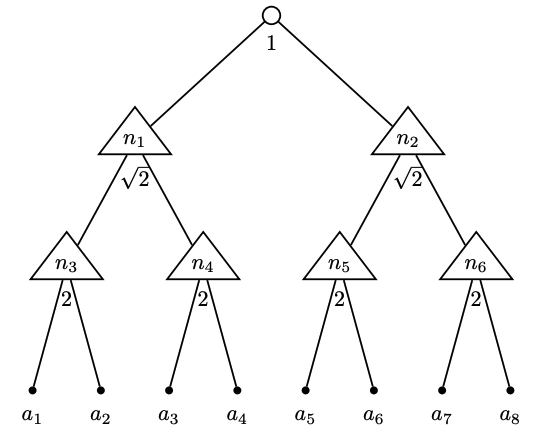} 
    \caption{The nesting tree encoding the correlation structure of the synthetic data. The scale parameters are shown below the respective nests. The root scale parameter is normalized to one.}
    \label{fig:my_label}
\end{figure}

\begin{figure}[h]
\centering

 \includegraphics[scale=0.6]{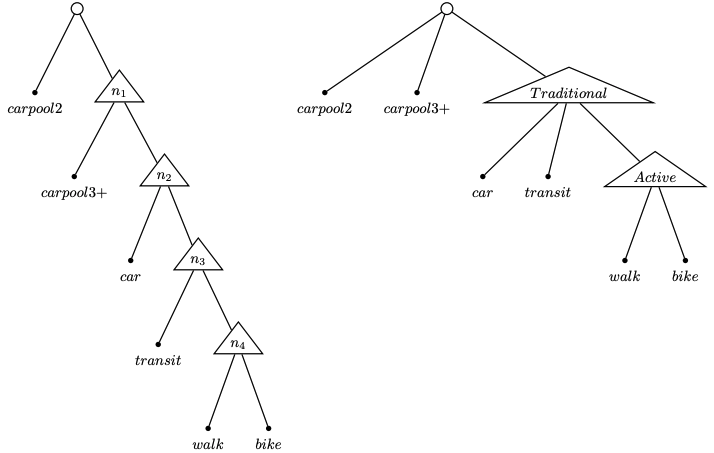}

  \caption{The learned nesting structure for the work travel model dataset (left) and a possible interpretation of the tree after collapsing nests with statistically insignificant scale parameters on the full dataset.}
\end{figure}

\begin{figure}[h]
\centering

\scalebox{0.85}{
 \includegraphics[scale=0.55]{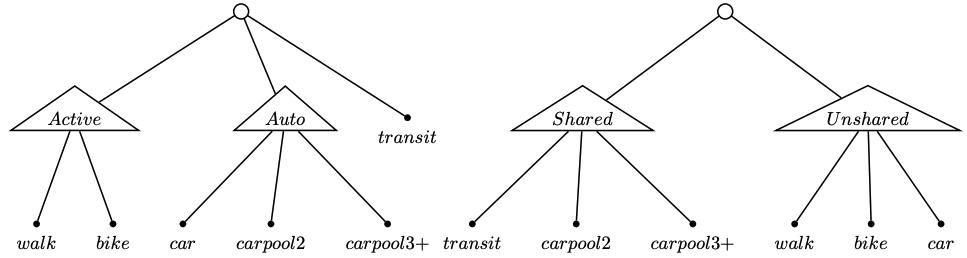} 

}

    \caption{Alternative models for the work travel mode choice dataset. Model A (left) and Model B (right)}
\end{figure}

\clearpage
   \section*{A. Appendix}
 \subsection*{A.1. More on regularization}
 We look at three possible cases and provide counter examples that show that the training likelihood can worsen by increasing complexity as defined by the number of nests and the nesting level: 

\begin{list}
{\textsc{Case} \arabic{bean}.}{\usecounter{bean}}   
  \item \textit{Increasing the number of nests by 1, while holding the nesting level constant}.
    Consider the optimal tree with $2$ nests and height $2$ shown in the left of Figure A.1. In this example, we are able to increase the number of nests by 1 without increasing the nesting level by grouping leaf nodes $3$ and $4$ in a new nest $c$. The likelihood can\textit{not} worsen, as the estimated scale of the new nest $c$ can be made equal to the normalized root scale-- effectively setting the covariance between the error terms of alternatives 3 and 4 to zero (c.f. Proposition 2.1).
  
However, it is not always possible to increase the number of nests without worsening the training likelihood. As an example, consider the optimal tree with $2$ nests and $2$ nesting levels shown in Figure A.2, and consider the addition a third nest $c$ to this tree. It is not possible to add this nest without either changing the alternative to nest allocations, increasing the nesting level, or running into degeneracy. Therefore, there can be no guarantee that the likelihood will not worsen.

    \item \textit{Increasing the nesting level by 1, while holding the number of nests constant}:
    Since increasing the nesting level without adding any additional nests would entail changing the nesting structure of a present nest or nests, there is clearly no guarantee that the likelihood cannot worsen.
    \item \textit{Increasing the number of nests and the nesting level each by 1}:
    Consider the optimal tree with a single nest and 2 nesting levels shown in the left of Figure A.3. Leaf nodes 3, 4, and 5 can be nested together in a new nest $b$ with the same scale parameter as nest $a$ increasing the number of nests and the nesting level without worsening the likelihood. \\
    However, there is no guarantee in general that the likelihood cannot worsen when both the number of nests and the nesting level are increased by one. As an example, consider the tree with 2 nests and 3 nesting levels shown in Figure A.4. The only way of increasing the nesting level of this tree is by nesting leaf nodes 4 and 5 together in one new nest \textit{c}, making nest \textit{b} degenerate.
\end{list}

  \subsection*{A.2. Derivatives of the likelihood function with respect to the edge indicators}

The discrete edge indicator variables $\textbf{x}$ can be broken down into four distinct types. The derivative of the likelihood function with respect to each of these four types has a different closed form which we derive in what follows.

(1) First order partial derivative of the likelihood function with respect to edges between the root and alternatives $x_{ra}$:
    \begin{align}
        {\frac{\partial \mathcal{L}}{\partial x_{ra}}} &= \sum_{n\in\mathcal{I}} \Big( c_{na}\big(  \ln{{P}_n\{a|\{r,a\}\}} +\sum_{l \in \{r\xrightarrow{}a\}}x_{l} \frac{\partial\ln{{P}_n\{a|l}\}}{\partial x_{ra}}\big) +\sum_{a'\in \mathcal{C}\setminus \{a\}}c_{na'}\big(\sum_{h \in \{r\xrightarrow{}a'\}}x_{h} \frac{\partial\ln{{P}_n\{a'|h}\}}{\partial x_{ra}}\big) \Big) \notag \\
        &=\sum_{n\in\mathcal{I}} \Big( c_{na}\big(\mu_rV_{an} -\mu_r\Gamma_n\{r\}\big) +\sum_{a'\in \mathcal{C}}c_{na'}\big(-\mu_r\frac{\partial\Gamma_n\{r\}}{\partial x_{ra}}\big) \Big) \notag \\
        &=\sum_{n\in\mathcal{I}} \Big( c_{na}\big(\mu_rV_{an} -\mu_r\Gamma_n\{r\}\big) -\sum_{a'\in \mathcal{C}}c_{na'}\exp{\big(\mu_r\big(V_{an}-\Gamma_n\{r\}\big)\big)} \Big) \tag{A.1}
    \end{align}
Expression (A.1) is defined in terms of summations over the elementary alternatives $\mathcal{C}$ and the set of individuals $\mathcal{I}$ and can be easily evaluated. \\
(2) First order partial derivative of the likelihood function with respect to edges between the root and nests $x_{rb}$:\\

By ``conditioning" on the nest node $b$ we can rewrite the log-likelihood as
    \begin{align*}
 %  \mathcal{L} &= \sum_{n\in\mathcal{I}} \sum_{a\in\mathcal{C}} \Big(c_{na}( \sum_{l \in \{r\xrightarrow{}a\}}x_{l} \ln{{P}(a|l)}) \Big)\\
  \mathcal{L}  &=\sum_{n\in\mathcal{I}} \sum_{a\in\mathcal{C}} \Big(c_{na}( x_{ra}\ln{{P}_n\{a|\{r,a\}\}}+ \sum_{b\in \mathcal{N}}\sum_{l\in \{b\xrightarrow{}a\}}x_{rb}x_{l} \ln{{P}_n\{a|\{r,l\}\}}) \Big)
    \end{align*}
    Then,
    \begin{align}
       \frac{\partial \mathcal{L}}{\partial x_{rb}}&=  \sum_{n\in\mathcal{I}} \sum_{a\in\mathcal{C}} \Big(c_{na}( \sum_{l\in\{b\xrightarrow{}a\}} (x_{l} \ln{{P}_n\{a|\{r,l\}\}} - x_{rb}x_{l} \mu_r\frac{\partial\Gamma_n\{r\}}{\partial x_{rb}}) - \sum_{b'\in \mathcal{N} \setminus \{ b \}}\sum_{h \in \{b'\xrightarrow{}a\}}x_{rb'}x_{h} \mu_r\frac{\partial\Gamma_n\{r\}}{\partial x_{rb}}) \Big) \notag \\
         &=\sum_{n\in\mathcal{I}} \sum_{a\in\mathcal{C}} \Big(c_{na}(\sum_{l \in \{b\xrightarrow{}a \}}  x_{l} \ln{{P}_n\{a|\{r,l\}\}} - \sum_{h \in \{r\xrightarrow{}a \}}x_{h} \mu_r\frac{\partial\Gamma_n\{r\}}{\partial x_{rb}}) \Big) \notag \\
         &=\sum_{n\in\mathcal{I}} \sum_{a\in\mathcal{C}} \Big(c_{na}( \sum_{ l \in \{b\xrightarrow{}a \}} x_{l} \ln{{P}_n\{a|\{r,l\}\}}) - \sum_{h \in \{r\xrightarrow{}a\}}x_{h} \exp{\big(\mu_r\big(\Gamma_n\{b\}-\Gamma_n\{r\}\big)\big)} \Big) \tag{A.2}
    \end{align}
    Now, at tree solutions, there is a unique path $h \in \{r\xrightarrow{}a\}$ such that $x_{h}=1$. We can therefore do away with the fourth summation in (A.2). Finally we obtain:
    \begin{align}
         {\frac{\partial \mathcal{L}}{\partial x_{rb}}}        &=\sum_{n\in\mathcal{I}} \sum_{a\in\mathcal{C}} \Big(c_{na}( \sum_{l \in \{b\xrightarrow{}a\}}x_{b\xrightarrow{}a} \ln{{P}_n\{a|\{r,l\}\}}) - \exp{\big(\mu_r\big(\Gamma_n\{b\}-\Gamma_n\{r\}\big)\big)}  \Big)  \tag{A.3}
    \end{align}
To evaluate the contributions of the third summation efficiently at tree solutions, we need not enumerate all paths $b\xrightarrow{}a$. Instead, a simple variant of Algorithm 2 can be adapted by treating node $b$ as the root node.\\
(3) First order partial derivative of the likelihood function with respect to edges between nests and other nests $x_{bb'}$:\\
By conditioning on nests $b$ and $b'$, we can rewrite the likelihood in a more convenient form
    \begin{align*}
 %  \mathcal{L} &= \sum_{n\in\mathcal{I}} \sum_{a\in\mathcal{C}} \Big(c_{na}( \sum_{l \in \{r\xrightarrow{}a\}}x_{l} \ln{{P}(a|l)}) \Big)\\
 \mathcal{L}   &=\sum_{n\in\mathcal{I}} \sum_{a\in\mathcal{C}} \Big(c_{na}( \sum_{b,b'\in \mathcal{N}}\sum_{l \in \{r\xrightarrow{}b \}}\sum_{h \in \{b'\xrightarrow{}a \}}x_{l}x_{bb'}x_{h} \ln{{P}_n\{a|\{l,h\}\}}) \Big)
    \end{align*}
    Then,
    \begin{align}
   \frac{\partial \mathcal{L}}{\partial x_{bb'}} &= \sum_{n\in\mathcal{I}} \sum_{a\in\mathcal{C}} \Big(c_{na}( 
   \sum_{l \in \{r\xrightarrow{}b\}}\sum_{h \in \{b'\xrightarrow{}a\}} x_{l}x_{h}\ln{{P}_n\{a|\{l,h\}\}}+\sum_{k \in \{r\xrightarrow{}a\}}x_{k} \frac{\partial\ln{{P}_n\{a|k\}}}{\partial x_{bb'}}\big)  \Big) \tag{A.4}
    \end{align}  
To evaluate (A.4) efficiently at tree solutions, note that contribution of the term $x_{l}x_{h}\ln{{P}_n\{a|\{l,h\}\}}$ is zero unless there is a path from the root to $b$ and from $b'$ to $a$ which can be easily checked. If there is such paths at the tree solution at which the derivative is being evaluated, then $\ln{{P}_n\{a|\{l,h\}\}}$ can be computed using steps 2 to 5 of Algorithm 2. Evaluating $\sum_{k \in \{r\xrightarrow{}a\}}x_{k} \frac{\partial\ln{{P}_n\{a|k\}}}{\partial x_{bb'}}$ is somewhat more involved. For a path $k \in \{r \xrightarrow{}a\}$, let  $b^{(1)}_{k},...,b^{(s)}_{k}$ be the set of nodes visited along $k$ where $b^{(1)}_{k}=r$ and $b^{(s)}_{k}=a$, where $s$ is the length of the path $k$. Using (3.9) we have
\begin{align}
  \frac{\partial\ln{{P}_n\{a|k\}}}{\partial x_{bb'}}&= \sum_{i=2}^{s-2}(\mu_{b^{(s-i)}_{k}}-\mu_{b^{(s-i+1)}_{k}})
  \frac{\partial\Gamma_n\{b^{(s-i+1)}_{k}\}}{\partial x_{bb'}}+(\mu_{r}-\mu_{b^{(2)}_{k}}) \frac{\partial\Gamma_n\{n{b^{(2)}_{k}}\}}{\partial x_{bb'}}-\mu_r\frac{\partial\Gamma_n\{r\}}{\partial x_{bb'}}  \tag{A.5}
\end{align}
The derivative of the inclusive value of a nest $g$ with respect to $x_{bb'}$ is computed recursively. Using (3.10) we have:
\begin{equation}
    \frac{\partial\Gamma_n\{g\}}{x_{bb'}} =\mu_{g}\sum_{j \in \mathcal{N}}x_{gj}\exp(\mu_g(\Gamma_n\{j\}-\Gamma_n\{g\}))  \frac{\partial\Gamma_n\{j\}}{x_{bb'}} \tag{A.6}
\end{equation}
The base case for this recursion is 
\begin{equation}
\frac{\partial\Gamma_n\{b\}}{x_{bb'}}= \frac{1}{\mu_b} \exp(\mu_b(\Gamma_n\{b'\}-\Gamma_n\{b\}))  \tag{A.7}
\end{equation}
Any call of (A.6) will terminate with a call of (A.7) without any further recursive calls. The evaluation of (A.5) can be computed efficiently using a similar idea to Algorithm 2 where the contributions are calculated according to (A.6).

(4) First order partial derivative of the likelihood function with respect to edges between nests and  alternatives $x_{ba}$:
    \begin{align*}
  % \mathcal{L} &= \sum_{n\in\mathcal{I}} \sum_{a\in\mathcal{C}} \Big(c_{na}( \sum_{l \in \{r\xrightarrow{}a\}}x_{l} \ln{{P}(a|l)}) \Big)\\
  \mathcal{L}  &=\sum_{n\in\mathcal{I}} \sum_{a\in\mathcal{C}} \Big(c_{na}( \sum_{b\in \mathcal{N}}\sum_{l\in\{r\xrightarrow{}b\}}x_{l}x_{ba} \ln{{P}_n\{a|\{l,b\}\}})\Big)
    \end{align*}
     Then,
    \begin{align}
       \frac{\partial \mathcal{L}}{\partial x_{ba}}&=  \sum_{n\in\mathcal{I}}  \Big(c_{na}( \sum_{l \in \{r\xrightarrow{}b\}} x_{l} \ln{{P}_n\{a|\{l,b\}\}})+ \sum_{a'\in\mathcal{C}}c_{na'} (\sum_{h \in \{r\xrightarrow{}a'\}}x_{h} \frac{\partial\ln{{P}_n\{a'|h\}}}{\partial x_{ba}}\big) \Big)  \tag{A.8}
    \end{align}
The contribution of the term $ x_{l} \ln{{P}_n\{a|\{l,b\}\}})$ in (4.26) is zero unless there is a path from the root to nest $b$. At tree solutions this contribution can be evaluated through a modification of Algorithm 2 by starting from the nest $b$ and propagating upwards to the root. Finally the contribution of the rightmost summation in (A.8) is computed using recursion using similar expressions to (A.6) and (A.7). The base case for the recursion in this case is different, and is given by 
\begin{equation}
\frac{\partial\Gamma_n\{b\}}{x_{ba}}= \frac{1}{\mu_b} \exp(\mu_b(V_{an}-\Gamma_n\{b\}))  \tag{A.9}
\end{equation}
  \subsection*{A.3. Counting the total number of possible paths from the root to the alternatives}
  
  We first rewrite the log-likelihood function more explicitly:
\begin{align}
    \mathcal{L}&=\sum_{n\in \mathcal{I}} \sum_{a \in \mathcal{C}}c_{na} \Big[x_{ra}(\mu_rV_{an}-\mu_r\Gamma_n\{r\}) +\sum_{b\in \mathcal{N}}  x_{rb}x_{ba}(\mu_bV_{an} +(\mu_r-\mu_b)\Gamma_n\{b\}-\mu_r\Gamma_n\{r\}) + \ldots \notag \\
    &+ \sum_{b_1, b_2, \ldots, b_p\in \mathcal{N}} (x_{rb_1}\prod_{i=2}^{p-1} x_{b_ib_{i+1}}x_{b_pa}) (\mu_{b_p}V_{an} +\sum_{i=1}^{p-1}(\mu_{b_{i}}-\mu_{b_{i+1}})\Gamma_n\{b_{i+1}\}+(\mu_{r}-\mu_{b_1})\Gamma_n\{b_1\}-\mu_r\Gamma_n\{r\})\Big] \tag{A.10}
\end{align}
The total number of terms in the expression (A.10) above is $|\mathcal{I}|\cdot|\mathcal{C}|\cdot\floor*{|\mathcal{C}|! \cdot e}$. This is also the total number of paths from the root node to the alternative nodes $a \in \mathcal{C}$.\\
To see why this is so, recall that $p=|\mathcal{N}|=|\mathcal{C}|-2$ and notice that the number of terms in the square brackets is equal to the number of permutations of the index $b$. which is $p(1+p+p(p-1)+\ldots+p!) =p(1+p!\sum_{k=1}^{p-1}\frac{1}{k!})$. Now $p!\sum_{k=1}^{p-1}\frac{1}{k!} < p!\sum_{k=1}^{\infty}\frac{1}{k!} =p!e$. The difference is given by $p!\sum_{k=n}^{\infty}\frac{1}{k!}=1+p!\sum_{k=n+1}^{\infty}\frac{1}{k!} <2$.
 \clearpage

  \begin{figure}
    \centering
 \includegraphics[scale=0.50]{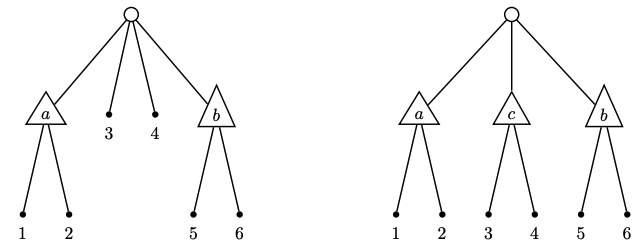} 
\caption*{\textbf{Figure A.1:} A case where it is possible to increase the number of nests while holding the nesting level fixed without worsening the likelihood}

  \end{figure}  
  
      \begin{figure}
    \centering
 \includegraphics[scale=0.5]{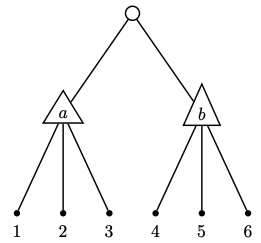} 
    \caption*{\textbf{Figure A.2:} A case where it is \textit{not} possible to increase the number of nests while holding the nesting level fixed without worsening the likelihood}
    \label{fig:my_label}
\end{figure}

     \begin{figure}
     \centering
   \includegraphics[scale=0.65]{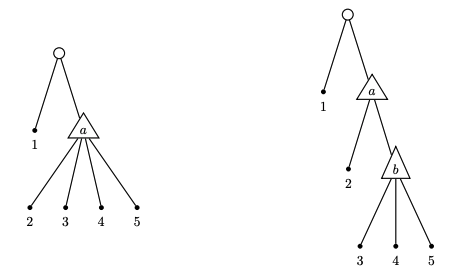} 
\caption*{\textbf{Figure A.3:} A case where increasing the number of nests and nesting level each by one does not worsen the likelihood.}
  \end{figure}

  \begin{figure}
    \centering
 \includegraphics[scale=0.55]{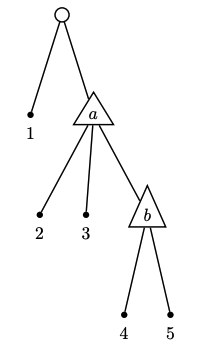} 
\caption*{\textbf{Figure A.4:} A case where it is not possible to increase the number of nests and the nesting levels each by one without worsening the likelihood.}
\end{figure}

 \newpage
\end{document}